\acrodef{IoT}{Internet of Things}
\acrodef{BS}{base station}
\acrodef{RSS}{received signal strength}
\acrodef{MSE}{mean square error}
\acrodef{MAE}{mean absolute error}
\acrodef{RMSE}{root mean square error}
\acrodef{LoS}{line-of-sight}
\acrodef{NLoS}{non-line-of-sight}
\acrodef{CNN}{convolutional neural network}
\acrodef{MLP}{multilayer perceptron}
\acrodef{ML}{machine learning}
\acrodef{FLOPs}{floating point operations}
\acrodef{mmWave}{millimeter wave}
\newtheorem{theorem}{Theorem}
\newtheorem{lemma}{Lemma}
\newcommand{\E}{\mathbb{E}}
\renewcommand{\P}{\mathbb{P}}
\newcommand{\ka}{\nonumber \\}
\newcommand{\domcom}{d_c}
\newcommand{\domspe}{d_{s_j}}
\newcommand{\domspelist}{D_s}
\newcommand{\dimfeature}{m}
\newcommand{\head}{\boldsymbol{v}}
\newcommand{\domcomweight}{w_c}
\newcommand{\domspeweight}{W_s}
\newcommand{\feature}{\boldsymbol{f}}
\newcommand{\medoutputnum}{N}
\newcommand{\mean}{\mu}
\newcommand{\covariance}{\Sigma}
\newcommand{\medmultimodalin}{\boldsymbol{X}}
\newcommand{\medout}{\boldsymbol{Y}}
\newcommand{\numagents}{K}
\newcommand{\nn}{\theta}
\newcommand{\medloss}{\mathcal{L}}
\newcommand{\phycoeff}{\lambda}
\newcommand{\distance}{W}
\newcommand{\antennanum}{U}
\newcommand{\vehiclenum}{V}
\newcommand{\txpower}{P_\text{tx}}
\newcommand{\txgain}{G_\text{tx}}
\newcommand{\rxgain}{G_\text{rx}}
\newcommand{\pathloss}{PL}
\newcommand{\carrier}{f_c}
\newcommand{\shadow}{S_f}
\newcommand{\reflection}{\bar{R}}
\newcommand{\blockage}{B}
\newcommand{\rxheight}{r_h}
\newcommand{\hypothesis}{\Theta}
\newcommand{\medbatchsize}{C}
\newcommand{\medepochs}{E}
\newcommand{\medlr}{\eta}
\newcommand{\trainnum}{m^\text{tr}}
\newcommand{\testnum}{m^\text{te}}
\newcommand{\rss}{R}
\newcommand{\Ws}[2]{{W_{}^{}}} % Symbol bandwidth
\newcommand{\TSIR}[2]{{\tau_{}^{}}}
\DeclareMathAlphabet{\mathsf}{OML}{cmbr}{m}{it}
\newtheorem{theorem}{Theorem}
\newtheorem{lemma}{Lemma}
\newtheorem{algorithm}{Algorithm}
\newcommand{\bd}{\begin{description}}
\newcommand{\ed}{\end{description}}
\newcommand{\be}{\begin{enumerate}}
\newcommand{\ee}{\end{enumerate}}
\newcommand{\bi}{\begin{itemize}}
\newcommand{\ei}{\end{itemize}}
\newcommand{\bl}{\begin{list}}
\newcommand{\el}{\end{list}}
\newcommand{\bt}{\begin{tabbing}}
\newcommand{\et}{\end{tabbing}}
\title{Efficient Domain Generalization in Wireless Networks with Scarce Multi-Modal Data}
\author{  		\IEEEauthorblockN{
			Minsu~Kim, Walid~Saad \textit{Fellow, IEEE}, and Doru~Calin
			\vspace{-0.7cm}
		}
        
        		\thanks{
        			M.\ Kim and W.\ Saad are with the Institute for Advanced Computing,  Virginia Tech, Alexandria, VA, USA (email: \{\texttt{msukim, walids}\}@vt.edu.)
        			
        			D.\ Calin is with MediaTek, Warren, NJ, USA (eamil: \{\texttt{doru.calin@mediatek.com}\}).
        			
        			The implementation code will be available on https://github.com/news-vt.
        			
					A preliminary version of this work was submitted to IEEE International Conference on Communications (ICC) 2026.
		
		%Do we make the code public?
				}
        
        }
\begin{document}
\maketitle

\begin{abstract}
In 6G wireless networks, multi-modal machine learning (ML) models can be leveraged to enable situation-aware network decisions in dynamic environments. However, trained ML models often fail to generalize under domain shifts when training and test data distributions are different because they often focus on modality-specific spurious features. In practical wireless systems, domain shifts occur frequently due to dynamic channel statistics, moving obstacles, or hardware configuration. To mitigate domain shifts, one can do extensive measurement or generate synthetic multi-modal data from various wireless environments. However, without access to target domains, this approach is highly time-consuming. Moreover, public multi-modal wireless datasets are extremely scarce unlike vision or language tasks. Thus, there is a need for learning frameworks that can achieve robust generalization under scarce multi-modal data in wireless networks. In this paper, a novel and data-efficient two-phase learning framework is proposed to improve generalization performance in unseen and unfamiliar wireless environments with minimal amount of multi-modal data. In the first stage, a physics-based loss function is employed to enable each base station (BS) to learn the physics underlying its wireless environment captured by multi-modal data. The data-efficiency of the physics-based loss function is analytically investigated. In the second stage, collaborative domain adaptation is proposed to leverage the wireless environment knowledge of multiple BSs to guide under-performing BSs under domain shift. Specifically, domain-similarity-aware model aggregation is proposed to utilize the knowledge of BSs that experienced similar domains. To validate the proposed framework, a new dataset generation framework is developed by integrating CARLA and MATLAB-based mmWave channel modeling to predict mmWave received signal strength from LiDAR, RGB, radar, and GPS. Simulation results show that the proposed physics-based training requires only 13\% of data samples to achieve the same performance as a state-of-the-art baseline that does not use physics-based training. Moreover, the proposed collaborative domain adaptation needs only 25\% of data samples and 20\% of FLOPs to achieve the convergence compared to baselines. 
\end{abstract}

\section{Introduction}
Next-generation wireless systems such as 6G can potentially leverage a broad range of sensing modalities, including LiDAR, RGB, radar, or GPS, so as to provide situational awareness \cite{10929033}. For instance, in vehicular networks, multi-modal sensing data can be exploited for various communication functions, such as beam forming and blockage prediction \cite{park2025resource}. Multi-modal sensing data can also be used for learning dynamic wireless environments such as path loss and \ac{RSS} estimation \cite{10949588}. To process multi-modal data, \ac{ML} models are widely adopted because neural networks can extract and fuse features from different modalities. Hence, fusing heterogeneous information sources can provide a wireless network with situational-awareness, such as moving blockages, obstacles, or trajectories that single-modality \ac{ML} schemes cannot fully provide under wireless environments.

Despite the potential benefits of using multi-modal data, conventional data-driven training methods (e.g, supervised or semi-supervised) require a significant amount of data. However, multi-modal wireless communications scenarios have typically scarce datasets compared to other \ac{ML} tasks such as vision and language. For instance, DeepSense 6G \cite{alkhateeb2023deepsense} has around one million data samples while LAION-5B \cite{schuhmann2022laion} has around five billion data samples. Hence, models often become overfitted to modality-specific spurious features such as noise, camera brightness, or LiDAR intensity patterns \cite{qu2023modality} due to training data scarcity. Hence, trained multi-modal models often fail to generalize to unseen environments when training and test data distributions are different \cite{zhang2025out}. Such a data distribution change after training is called \emph{domain shift}. In practical wireless environments, domain shifts can happen frequently due to inherently dynamic channel statistics, blockages, weather, and device-related characteristics. For example, consider a \ac{mmWave} beamforming task that should be performed based on multi-modal data input in vehicular scenarios as done in \cite{park2025resource}. Environmental change can affect each modality sensor. For example, a target vehicle with unfamiliar shape can appear while rain or snow can reduce the intensity of LiDAR and attenuate radar signals. Such domain shifts induce a change to the input distribution of the \ac{ML} model, and they are defined as \emph{covariate shift}. Meanwhile, large vehicles can block the \ac{mmWave} signal of the target vehicle, removing \ac{LoS} paths. A target vehicle can have different \ac{mmWave} antenna position on the roof or the windshield, changing the optimal beam indices completely. Such domain shifts are known as \emph{concept shift}, and they change the mapping between the input and output (i.e., beam index). Both covariate and concept shifts can significantly affect the performance of \ac{ML} models trained by conventional data-driven methods because learned mappings do not align with the new distributions. One can collect a sufficient amount of data from every possible wireless environment to mitigate domain shifts, but this approach is not practical. Without access to target domains, robust generalization requires extensive measurement or synthetic data generation, which is costly.  Therefore, there is a need to design new learning frameworks that can achieve robust generalization with minimal multi-modal data in wireless networks.

\subsection{Related Works}
There are a handful of works \cite{jiao2025addressing, jiao2025ai2mmum, chen2025analogical, liu2023wisr, 10207706, 10742098} that considered the problem of \ac{ML} generalization for wireless communication scenarios. In \cite{jiao2025addressing}, the authors proposed a two-stage multi-modal pre-training and downstream task adaptation framework to utilize one universal model for diverse downstream tasks. The work in \cite{jiao2025ai2mmum} used a large language model (LLM) to integrate language and radio frequency modalities to fine-tune it for diverse tasks. In \cite{chen2025analogical}, the authors proposed an analogical learning framework to allow a model to learn the scenario-related information that can be used across different scenarios. However, the works in \cite{jiao2025addressing, jiao2025ai2mmum, chen2025analogical} considered the generalization of a pre-trained model to different tasks rather than generalization to domain shifts caused by dynamic wireless environments. The authors in \cite{liu2023wisr} and \cite{10207706} considered generalization for gesture classification using wireless sensing under different subcarriers and scattering environments. In \cite{10742098}, the authors proposed a model for radio frequency fingerprinting to identify transmitters to generalize over various receivers. However, the works in \cite{liu2023wisr, 10207706, 10742098} only considered covariate shifts, while concept shifts are not studied. In practice, both domain shifts can occur frequently in wireless environments, and their generalization performance can be limited only under covariate shifts. Moreover, the works in \cite{liu2023wisr, 10207706, 10742098} assumed access to multiple training data domains, which does not always hold in practice due to data scarcity.

Meanwhile, the works in \cite{seretis2022toward, li2025digital, zheng2023cell, li2024map} used wireless models such as channel propagation as a prior physical information to improve the data efficiency. In \cite{seretis2022toward}, the authors used the geometry of indoor maps to predict \ac{RSS} by diving receivers into \ac{LoS} and \ac{NLoS}. The authors in \cite{li2025digital} proposed physics-informed reinforcement learning for \ac{mmWave} indoor navigation tasks with digital twins. The work in \cite{zheng2023cell} proposed an imaged-based wireless propagation based model based on generative adversarial networks to predict the \ac{RSS} of cells. In \cite{li2024map}, the authors used a 3GPP path loss formula for indoor \ac{RSS} prediction for \ac{LoS} receivers while used regression based methods for \ac{NLoS}. However, the solutions in \cite{seretis2022toward, li2025digital, zheng2023cell, li2024map} did not consider the robustness of their trained models to domain shifts caused by wireless environments. Moreover, the works in \cite{seretis2022toward, li2025digital, zheng2023cell, li2024map} also did not analyze the impact of using prior physical information during training on the data-efficiency. To the best of our knowledge, there are no current works that jointly consider multiple modalities, domain generalization, and data efficiency to mitigate domain shifts in wireless communication scenarios. 

\subsection{Contributions}
The main contribution of this paper is a novel data-efficient \ac{ML} framework that can promptly achieve robust domain generalization to unseen and unfamiliar wireless environments with minimal multi-modal data. The proposed framework consists of two phases: In the first phase, we train each \ac{BS}'s model by exploiting the channel propagation physics of the wireless environment captured by multi-modal input. Specifically, we propose to use a physics-based loss function to mitigate domain shifts and to improve data-efficiency. We theoretically prove how the physics-based loss function can improve data efficiency by regularizing the hypothesis space. In the second phase, we propose collaborative domain adaptation, which utilize multiple \ac{BS}s' knowledge of their wireless environment to help a \ac{BS} under domain shifts. Specifically, we first propose a method to measure the domain similarity between two \ac{BS}. Then, we propose domain-similarity-aware model aggregation to use the knowledge of other \ac{BS}s who experienced similar domains to adapt to domain shifts with minimal amount of data. To validate the proposed framework, we develop a new dataset generation framework, which predicts \ac{mmWave} \ac{RSS} using multi-modal sensor (LiDAR, RGB, Radar, and GPS) inputs in urban vehicular scenarios. We integrate CARLA \cite{dosovitskiy2017carla} and MATLAB for the data generation. Specifically, we use CARLA's autonomous vehicle functions to induce dynamic blockages and multi-modal sensing modules. Based on the generated CARLA environment, we use MATLAB to perform ray-tracing to calculate the \ac{RSS} of the coverage area of \acp{BS}. Simulation results show that our proposed framework can achieve significantly more robust generalization and data efficiency compared to the standard fine-tuning and state-of-the-art baselines that used physics-prior information. Specifically, our framework requires only 13\% of the training data to achieve the same performance as the baseline that does not use physics-based training. For the proposed collaborative domain adaptation, our method needs only 25\% of data samples and 20\% of \ac{FLOPs} to converge compared to baselines while the standard fine-tuning fails to converge due to lack of data. 

The rest of this paper is organized as follows. Section \ref{sec: med_system_model} presents the system model. In Section \ref{sec: med_use_case}, we present a use case to validate the proposed idea. Section \ref{sec: med_data_generation} provides the data generation framework. In Section \ref{sec: med_experiments}, simulation results are provided. Finally, conclusions are drawn in Section \ref{sec: med_conclusion}.

\section{System Model} \label{sec: med_system_model}
We consider $\numagents$ \ac{BS}s equipped with a uniform planar array (UPA) of $\antennanum$ antennas and a set $\mathcal{\vehiclenum}$ of $\vehiclenum$ vehicles distributed over a given geographical area. We assume that each \ac{BS} is equipped with multi-modal sensors including camera, LiDAR, radar, and GPS. Based on a captured multi-modal input $\medmultimodalin_k$, \ac{BS} $k$ it makes a network decision $\medout_k$. Without loss of generality, output $\medout$ can be a beamforming index, channel estimation, handover decision, or \ac{RSS} prediction of the surrounding area \cite{alkhateeb2023deepsense, jiao2025addressing, jiao2025ai2mmum, park2025resource}. To process multi-modal input $\medmultimodalin_k$, each \ac{BS} $k$ uses an \ac{ML} model $\nn_k$. In our system, a \emph{domain} is essentially a joint probability distribution $\P(\medmultimodalin, \medout)$ between input $\medmultimodalin$ and output $\medout$. Particularly, we define the joint probability distributions $\P^{\text{train}}_{k}(\medmultimodalin, \medout)$ and  $\P^{\text{test}}_{k}(\medmultimodalin, \medout)$ as the domain of training and test data, respectively. %After training, $\nn_k$ is deployed for inference to process $\medmultimodalin$ sampled from $\P^{\text{test}}_{k}(\medmultimodalin, \medout)$ in real-time. 
We define domain shift as any event that causes $\P^{\text{test}}_{k}(\medmultimodalin, \medout) \neq \P^{\text{train}}_{k}(\medmultimodalin, \medout)$. Specifically, a \emph{covariate shit} changes input distribution as $\P^{\text{train}}_{k}(\medmultimodalin) \neq \P^{\text{test}}_{k}(\medmultimodalin)$, and a \emph{concept shift} changes the mapping between $\medmultimodalin_k$ and $\medout_k$ as $\P^{\text{train}}_{k}(\medout | \medmultimodalin) \neq \P^{\text{test}}_{k}(\medout | \medmultimodalin)$.  In a wireless system, both covariate and concept shifts occur frequently. For example, consider a \ac{mmWave} beamforming task based on multimodal input $\medmultimodalin$. Covariate shifts can occur if a target vehicle has a new shape, affecting the vision domain, or under weather changes. Concept shifts can, for example, occur when a \ac{mmWave} antenna is mounted in a different position on the target vehicle, changing the optimal beam index. For a \ac{mmWave} \ac{RSS} prediction task, unseen blockages change channel propagation paths significantly, making a new mapping between $\medmultimodalin$ and $\medout$. Under such domain shifts, trained models often show severe performance degradation. 

Although collecting additional training data can mitigate domain shifts, such an approach is prohibitively time-consuming and costly. Moreover, under domain shifts, a trained model must also be adapted within a few epochs using significantly limited of data from target domains. Hence, there is a need to design a data-efficient multi-modal training and adaptation frameworks to achieve robust generalization to unseen wireless environments.

\subsection{Problem Formulation}
Given our model, we now formulate two multi-objective optimization problems for training and adaptation, respectively. For training, we aim to minimize the amount of training data $\trainnum_k$ and $\nn_k$ that can optimize the generalization error under unseen domain $\P_k^\text{test}(\medmultimodalin, \medout)$ for \ac{BS} $k, \forall k$ as follows:
\begin{subequations}
\begin{align}
 &\min_{\trainnum_k} \quad \left(
\trainnum_k, 
 \E_{(\medmultimodalin, \medout) \sim \P^\text{test}_k(\medmultimodalin, \medout)} [ l(\medout, \nn(\medmultimodalin)) ]
 \right) \label{prob: training}
  \\
 & \ \text{s.t.} \ \quad \nn = \arg\min_\nn \frac{1}{\trainnum_k}\sum_{i=1}^{\trainnum_k} l(\medout_i, \theta(\medmultimodalin_i)) \\
 & \quad \quad \ \ (\medout_i, \medmultimodalin_i )\sim \P^\text{train}_k(\medmultimodalin, \medout), \forall i,
\end{align}
\end{subequations}
where $l(\cdot)$ is a loss function. Here, $\P_k^\text{test}(\medmultimodalin, \medout)$ can be any unseen domain, such as different antenna positions on target vehicles, unfamiliar blockage patterns, or weather changes. This problem is challenging because we need to jointly minimize $\trainnum_k$ and the generalization error $l(\medout, \nn(\medmultimodalin))$ without access to $\P^\text{test}_k(\medmultimodalin, \medout)$. After training, we need to adapt the trained model when domain shifts occur to recover the performance. Hence, for the adaption, we aim to find optimal adapted model $\tilde{\nn}_k$ and minimal amount of data $\testnum_k$ from $\P^\text{test}_k(\medmultimodalin, \medout)$ to minimize the generalization error as follows:
\begin{subequations}
\begin{align}
	&\min_{\testnum_k, \tilde{\nn}_k^{(\medepochs)}} \quad \left(
	\testnum_k,  \E_{(\medmultimodalin, \medout) \sim \P^\text{test}_k(\medmultimodalin, \medout)} [ l(\medout, \tilde{\nn}_k^{(\medepochs)}(\medmultimodalin)) ]
	\right) \label{prob: adaptation}
	 \\
	&\ \text{s.t.} \quad \quad \tilde{\nn}_k^{(0)} \in \Theta'_k =\{\nn_k : \nn_k \text{\ that satisfies (1b)}\} \\
	&\ \quad \quad \quad \tilde{\nn}_k^{(e+1)} \hspace{-0.5mm} =  \hspace{-0.5mm} \tilde{\nn}_k^{(e)} - \medlr \nabla_{\tilde{\nn}_k} 
	\Big(
	\frac{1}{\testnum_k}\sum_{i=1}^{\testnum_k} l(\medout_i, \tilde{\nn}_k(\medmultimodalin_i))
	\Big)
	 \\
	& \ \quad \quad \quad (\medout_i, \medmultimodalin_i )\sim \P^\text{test}_k(\medmultimodalin, \medout), \forall i; \ e \in [0, \medepochs-1],
\end{align}
\end{subequations}
where $\medepochs$ is the number of epochs for adaptation. This problem is challenging since the amount of available data from $\P^\text{test}_k(\medmultimodalin, \medout)$ is significantly smaller than $\P^\text{train}_k(\medmultimodalin, \medout)$. Next, we present how to solve problems \eqref{prob: training} and \eqref{prob: adaptation} with the proposed physics-based training and collaborative domain adaptation.

\section{Proposed Physics-Based Training \\ and Collaborative Adaptation } \label{sec: proposed_framework}
To solve problems \eqref{prob: training} and \eqref{prob: adaptation}, we design a two-phase learning framework. For the first phase, we make predictions of a model consistent with channel propagation so as to embed domain-invariant physical knowledge into the training for \eqref{prob: training}. For the second phase, we use the knowledge of other trained \ac{BS}s to help under-performing \ac{BS}s under domain shifts based on domain similarity for \eqref{prob: adaptation}. 

\subsection{Physics-based Training} \label{subsec: physics_based_training}
In the first phase, we solve problem \eqref{prob: training} to minimize the generalization error with minimal training data. Essentially, without access to $\P^{\text{test}}_k(\medmultimodalin, \medout)$, model $\nn_k$ usually overfits to $\P^{\text{train}}_k(\medmultimodalin, \medout)$, thereby being susceptible to covariate and concept shifts. Model $\nn_k$ usually memorizes noise in $\P^{\text{train}}_k(\medmultimodalin, \medout)$ rather than learning the relationship between $\medmultimodalin$ and $\medout$. To mitigate this, we make predictions of model $\nn_k$ consistent with channel propagation in the wireless environment of \ac{BS} $k$. As such, model $\nn_k$ can be robust to covariate and concept shifts because its predictions are based on physical laws invariant over domains. To this end, we use a physics-based loss function to embed channel propagation characteristics into the training, as follows: 
\begin{align}
    &\medloss_{\text{total}}  \hspace{-0.3mm} (\medout_k, \hspace{-0.5mm} \nn_k \hspace{-0.3mm} (\medmultimodalin_k) \hspace{-0.3mm} ) \hspace{-0.75mm} = \hspace{-0.75mm} \medloss_\text{data} (\medout_k, \nn_k(\medmultimodalin_k)) \hspace{-0.75mm} + \hspace{-0.75mm} \phycoeff \medloss_\text{phy}(\medout_k, \nn_k(\medmultimodalin_k)), \label{physics_loss}
\end{align}
where $(\medmultimodalin_k, \medout_k) \sim \P^{\text{train}}_{k}(\medmultimodalin, \medout)$, $\medout_k$ is a ground truth output (e.g., beamforming index or \ac{RSS}), $\medloss_\text{data} (\medout_k^\text{true}, \nn_k(\medmultimodalin_k)) = \frac{1}{\medbatchsize} \sum_{i=1}^{\medbatchsize} l_{\text{data}}(\medout_{k, i}, \nn_k(\medmultimodalin_{k, i}))$ is an empirical loss function with batch size $\medbatchsize$, and $l_{\text{data}}$ is a supervised loss function such as \ac{MSE} or \ac{MAE}. Similarly, $\medloss_\text{phy}(\medout_k, \nn_k(\medmultimodalin_k))  = \frac{1}{\medbatchsize} \sum_{i=1}^{\medbatchsize} l_{\text{phy}}(\medout_{k,i}, \nn_k(\medmultimodalin_{k, i}))$, $l_{\text{phy}}(\medout_{k, i}, \nn_k(\medmultimodalin_{k, i}))$ is a physics-based loss function, and  $\phycoeff$ is the corresponding coefficient. Here, $\medloss_\text{phy}$ penalizes the model if its predictions do not follow the physical law of the channel propagation. Hence, $\medloss_\text{phy}$ can reduce the possible search space of $\nn_k$, improving data efficiency. To formalize this intuition, we first define a hypothesis class $\hypothesis_k$ of $\nn_k$. Without loss of generality, we assume $l_\text{data}$ and $l_\text{phy} \in [0, 1]$ as done in \cite{garg2020functional, kim2024spafl}\footnote{In practice, we can clip gradients to limit the impact of the loss function.}. To simplify the notation, we drop index $k$ hereinafter. For given $\tau \in [0,1]$ and a certain domain $\P(\medmultimodalin, \medout)$, we define $\hypothesis (\tau)$ as the subset of hypothesis $\hypothesis$ as follows
\begin{align}
    \hypothesis(\tau) = \big\{ \nn \in \hypothesis : \bar{\medloss}_\text{phy}(\medout, \theta(\medmultimodalin)) \leq \tau 
    \big\},
\end{align}
where $\bar{\medloss}_\text{phy}(\medout, \theta(\medmultimodalin)) = \E_{(\medmultimodalin, \medout) \sim \P(\medmultimodalin, \medout)}[\medloss_\text{phy}(\medout, \theta(\medmultimodalin)) ]$ is the expected loss function. In the following theorem, we derive the impact of the physics-based loss function $\bar{\medloss}_\text{phy}$ on the data efficiency.
\begin{theorem} \label{theorem:1}
    Suppose $\exists \nn^* \in \hypothesis$ s.t. $\bar{\medloss}_\text{data}(\medout, \nn^*(\medmultimodalin)) = 0$ and $\bar{\medloss}_\text{phy}(\medout, \nn^*(\medmultimodalin)) = 0$. Then, $\forall \epsilon_0, \epsilon_1 \in (0, 1)$ with $\delta \in (0, 1)$, and $m$ samples from $\P(\medmultimodalin, \medout)$ are sufficient to achieve error $\epsilon_1$ with probability $1 - \delta$, where
    \begin{align}
        m \geq \frac{1}{\epsilon_1} \left[  
        \ln(|\hypothesis(\epsilon_0)|) + \ln\frac{1}{\delta}
        \right] 
    \end{align}
\end{theorem}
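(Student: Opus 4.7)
The plan is to follow a standard PAC-style realizability argument, but applied to the physics-restricted subclass $\hypothesis(\epsilon_0)$ rather than the full class $\hypothesis$. The key observation is that the assumed $\nn^* \in \hypothesis$ with $\bar{\medloss}_\text{phy}(\medout, \nn^*(\medmultimodalin)) = 0$ automatically lies inside $\hypothesis(\epsilon_0)$ for every $\epsilon_0 \in (0,1)$, so the restricted class is non-empty and in fact realizes zero expected data loss. This makes it legitimate to treat the learner as one that searches only over $\hypothesis(\epsilon_0)$ and seeks a hypothesis with zero empirical data loss.

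First, I would call a hypothesis $\nn \in \hypothesis(\epsilon_0)$ \emph{bad} if $\bar{\medloss}_\text{data}(\medout, \nn(\medmultimodalin)) > \epsilon_1$. For a fixed such $\nn$, a single i.i.d. draw from $\P(\medmultimodalin, \medout)$ incurs nonzero loss with probability at least $\epsilon_1$, so the probability that $\nn$ achieves zero empirical loss on all $m$ draws is at most $(1 - \epsilon_1)^m \leq e^{-\epsilon_1 m}$. Applying a union bound over the at most $|\hypothesis(\epsilon_0)|$ bad hypotheses, the probability that \emph{some} bad hypothesis in $\hypothesis(\epsilon_0)$ is consistent with the training sample is at most $|\hypothesis(\epsilon_0)| \, e^{-\epsilon_1 m}$. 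Requiring this to be bounded by $\delta$ and solving yields exactly $m \geq \frac{1}{\epsilon_1}\bigl[\ln|\hypothesis(\epsilon_0)| + \ln(1/\delta)\bigr]$, matching the claim.

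The main obstacle, or at least the delicate step, is conceptual rather than computational: justifying that the learner effectively searches over $\hypothesis(\epsilon_0)$ rather than the full $\hypothesis$. This is where the physics-based term $\phycoeff \medloss_\text{phy}$ in \eqref{physics_loss} enters, since any empirical minimizer of the composite objective is pushed toward hypotheses whose physics loss is small, so the operative hypothesis space shrinks to $\hypothesis(\epsilon_0)$ for a $\phycoeff$-dependent tolerance $\epsilon_0$. Because $|\hypothesis(\epsilon_0)| \leq |\hypothesis|$, typically strictly, the sample complexity improves over the purely data-driven bound, which is the quantitative content of the data-efficiency claim. I would close with a brief remark that the analysis treats $|\hypothesis(\epsilon_0)|$ as finite for notational simplicity, and that replacing cardinalities by covering numbers or VC-type complexities extends the statement to continuous $\hypothesis$ without changing the qualitative conclusion.
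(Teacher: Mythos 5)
Your proposal is correct and follows essentially the same route as the paper: a realizable-case PAC argument that union-bounds over the hypotheses in $\hypothesis(\epsilon_0)$ whose expected data loss exceeds $\epsilon_1$, giving $|\hypothesis(\epsilon_0)|(1-\epsilon_1)^m \leq |\hypothesis(\epsilon_0)|e^{-\epsilon_1 m} \leq \delta$ and hence the stated bound on $m$. The one point where the paper does slightly more is in justifying the restriction to $\hypothesis(\epsilon_0)$: where you appeal informally to the regularizing effect of $\phycoeff\medloss_\text{phy}$, the paper runs a parallel probabilistic argument showing that any $\nn\notin\hypothesis(\epsilon_0)$ attains zero empirical physics loss with probability at most $(1-\epsilon_0)^m$, so with high probability only members of $\hypothesis(\epsilon_0)$ survive as zero-empirical-loss candidates.
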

\begin{proof}
See Appendix \ref{proof: theorem1}.
\end{proof}
From Theorem \ref{theorem:1}, we can see that the physics-based loss function reduces the required amount of data $m$ to achieve error $\epsilon_1$. As $\epsilon_0$ decreases, the size of $\hypothesis(\epsilon_0)$ decreases, leading to a smaller amount of data to achieve error $\epsilon_1$. Hence,  $\medloss_\text{phy}$ regularizes the \ac{ML} model to be consistent with how channels propagate, thereby improving the data efficiency and robustness to domain shifts. Meanwhile, if we set $\epsilon_0$ as $\infty$, the size of $\hypothesis(\epsilon_0)$ becomes the whole possible space of $\hypothesis$, resulting in more required data. Here, we note that having $\nn_k^*$ with zero loss is a theoretical limit that may not be achievable in real-world scenarios. However, Theorem \ref{theorem:1} still shows that any $\nn_k$ that adheres to the designed $\medloss_{\text{phy}}$ can improve the data efficiency. In addition, we corroborate our theorem in simulations as shown in Fig. \ref{fig: agent1}.

\subsection{Collaborative Domain Adaptation} \label{subsec: collaborative_domain_adaptation}
In the second phase, we aim to solve problem \eqref{prob: adaptation} to find the minimal amount of data from $\P^{\text{test}}_k(\medmultimodalin, \medout)$ to adapt the model under domain shifts. During inference, each \ac{BS} is deployed to make network decisions in real-time. If the performance of \ac{BS}'s model degrades due to domain shifts, it should adapt its model to the current domain, using data samples from $\P^{\text{test}}_{k}(\medmultimodalin, \medout)$. However, the amount of available data is extremely limited during inference, and the adaptation must happen in few epochs to support the real-time inference. Hence, we propose a \emph{collaborative domain adaptation framework}, in which \ac{BS}s collaborate to guide under-performing \ac{BS}s to improve the data-efficiency and adaptation speed. Essentially, if a given \ac{BS} experiences a domain shift, it can request help from other \ac{BS}s to retrieve their models. Then, it aggregates the received models based on similarities between the current domain and other \ac{BS}s' domains. Hence, by using the knowledge of \ac{BS}s who have high domain similarity, the under-performing \ac{BS} can adapt to the current domain promptly with limited amount of data. 

\subsubsection{Measuring Domain Similarities}
We first study how to measure the domain similarity between two \ac{BS}s. The last layer of $\nn_k$ will be given by $\head_k\in \mathbb{R}^{\medoutputnum \times \dimfeature}$, where $\medoutputnum$ is the dimension of the network decision of $\nn_k$ and $\dimfeature$ is the dimension of feature vector $\feature \in \mathbb{R}^\dimfeature$. The output of $\nn_k$ can be given by $\medout_k = \langle \feature, \head_k \rangle $.
\ac{BS} $k$ encodes multi-modal input $\medmultimodalin_k$ into features $\feature_k$ using deterministic or probabilistic mapping (e.g., dropout) with $\nn_k$. Hence, we can know that the joint probability distribution $\P_k(\medmultimodalin, \medout, \feature)$ contains almost the same information between $\medmultimodalin_k$ and $\medout_k$ as $\P_k(\medmultimodalin, \medout)$ does. As such, we consider $\P_k(\medmultimodalin, \medout, \feature)$ as a domain of \ac{BS} $k$ hereinafter. 

By using Bayes rule, we have:
\begin{align}
    \P_k(\medmultimodalin, \medout, \feature) &= \frac{\P_k(\medout| \feature, \medmultimodalin)} {\P_k(\medout|\medmultimodalin)} \times \P_k(\feature | \medmultimodalin) \times \P_k(\medmultimodalin, \medout) \\
    &\approx \P_k(\feature|\medmultimodalin) \P_k(\medmultimodalin) \E_{\P_k(\feature|\medmultimodalin)} 
    \left[
    \P_k(\medout|\feature)
    \right] \label{domain_10} \\
    &= \P_k(\feature) \E_{\P_k(\feature|\medmultimodalin)} 
    \left[
    \P_k(\medout|\feature)
    \right] \\
    &\approx \P_k(\feature), \label{joint_domain}
\end{align}
where \eqref{domain_10} follows from the fact that $\medmultimodalin$ and $\feature$ provide the almost same information to $\medout$, and \eqref{joint_domain} results from the fact that $\medout$ is deterministic for given $\feature$ and the last layer $\head$. From \eqref{joint_domain}, we can see that a domain can be approximated as the probability distribution of the encoded multi-modal features $\feature_k$. We assume that $\P_k(\feature)$ follows a multivariate normal distribution \cite{tian2021exploring}. Each \ac{BS} $k$ tracks its domain $\P_k(\medmultimodalin, \medout, \feature)$  by calculating the mean $\mean_k$ and covariance $\covariance_k$ of $\P_k(\feature)$ in the current batch. We use the Wasserstein-2 distance $\distance(\cdot)$ to measure the similarity of the domains of two \ac{BS}s as it holds symmetry and triangle inequality. Then, the similarity of the domains of two \ac{BS}s can be given as
\begin{align}
	\distance(k, j) = \sqrt{||\mean_k - \mean_j ||^2_2 + \text{Tr}(\covariance_k + \covariance_k -2\sqrt{\covariance_k^{1/2} \covariance_j \covariance_k^{1/2})} }. \label{distance}
\end{align} 
From \eqref{distance}, we can measure how similar their domains are. If \ac{BS} $k$ experiences an unseen domain, \ac{BS}s with similar domains can have knowledge that is helpful for adapting to the current domain. Next, we present how to aggregate models of other \ac{BS}s based on domain similarities. 

\subsubsection{Domain Similarity Aware Aggregation}
Under domain shifts, \ac{BS} $k$ first requests help to other \ac{BS}s to retrieve their model parameters $\{\nn_i\}, \forall i \neq k$. Then, \ac{BS} $k$ calculates $\distance(k, i)$, $\forall i \neq k$ using the received models and data $\medmultimodalin_k'$ from the current domain. Lastly, it generates a new model by aggregating the received models based on the measured domain similarities as follows
\begin{align}
    &\nn_k \leftarrow \sum_{i\neq k} \gamma_i \nn_i, \ \text{where} \ \gamma_i = \text{softmax}
    \left(
    \frac{1}{ \distance(k,  i) }
    \right).  \label{aggregation}
\end{align}
We assign higher weights to models that show higher domain similarity with the current data samples $\medmultimodalin_k'$. After the aggregation, \ac{BS} $k$ adapts its model to $\medmultimodalin_k'$ for a few epochs. This collaborative domain adaptation improves the data efficiency and adaptation speed because \ac{BS} $k$ initializes its model using other \ac{BS}s' models who have high domain similarities. Hence, the proposed framework allows the adaptation with significantly limited amount of data during inference. The overall pipeline is summarized in Algorithm \ref{algo:1}.

The proposed collaborative domain adaptation has complexity of $\mathcal{O}(\numagents)$ because an under-performing \ac{BS} needs to retrieve model parameters of other \ac{BS}s and calculate domain similarities.

\begin{algorithm}[t!]  \footnotesize
\caption{The overall pipeline for the physics-based training and collaborative domain adaptation. }  \label{algo:1}
\KwInput{Training dataset $\{\medmultimodalin_k, \medout_k \}$, and model $\nn_k$, and data from unseen domain $\medmultimodalin_k', \forall k$.}
\textbf{Training}:  \\
\quad \For {each agent $k$ with $\{\medmultimodalin_k, \medout_k \}$ }{
            Update $\nn_k$ using \eqref{physics_loss}; \\
        }    
\textbf{Inference}: \\
\quad \For {each \ac{BS} $k$ with $\medmultimodalin_k'$ (in parallel)}
{         
            Calculate $\mean_k$ and $\covariance_k$; \\
            \If{Domain shift occurs}
            {      	Retrieve models of other \ac{BS}S; \\
                	Calculate domain similarity $\gamma$ using \eqref{distance}; \\
                	Aggregate the retrieved models using \eqref{aggregation};\\
                	Adapt $\nn_k$ using $\medmultimodalin_k'$;
            		}       
}

    \vspace{-0.1cm}
\end{algorithm}

\section{Use case: Multi-modality based \ac{mmWave} RSS prediction} \label{sec: med_use_case}

\begin{figure*}[t!]
	\begin{subfigure}[t]{0.43\textwidth}
			\begin{center}   				%
				\includegraphics[width=\textwidth]{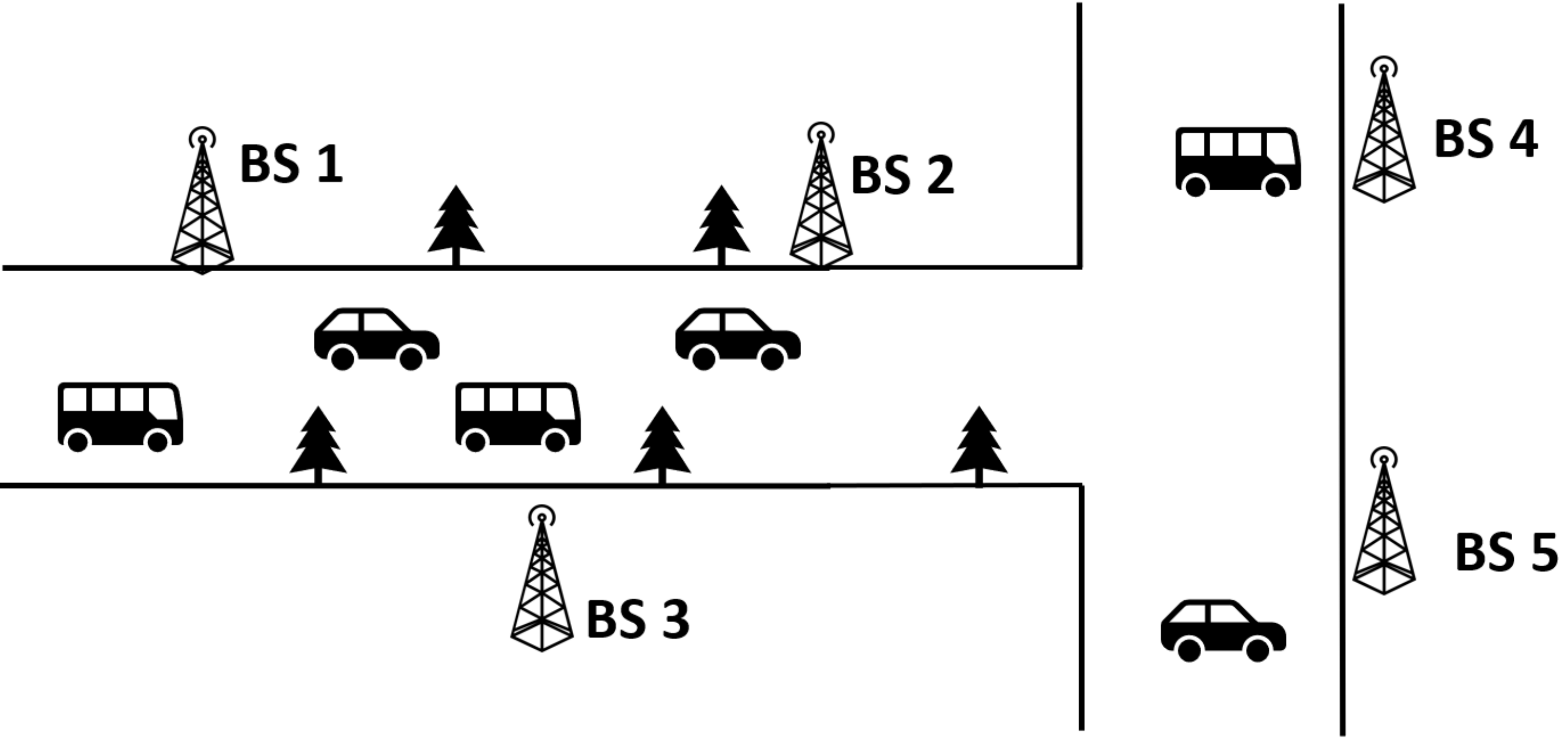}
		\end{center}
		\caption{An illustration of the use case system model.}
		\label{fig: use_case_1}
	\end{subfigure}\hfill
	\begin{subfigure}[t]{0.43\textwidth}
        \begin{center}   
		\includegraphics[width=\textwidth]{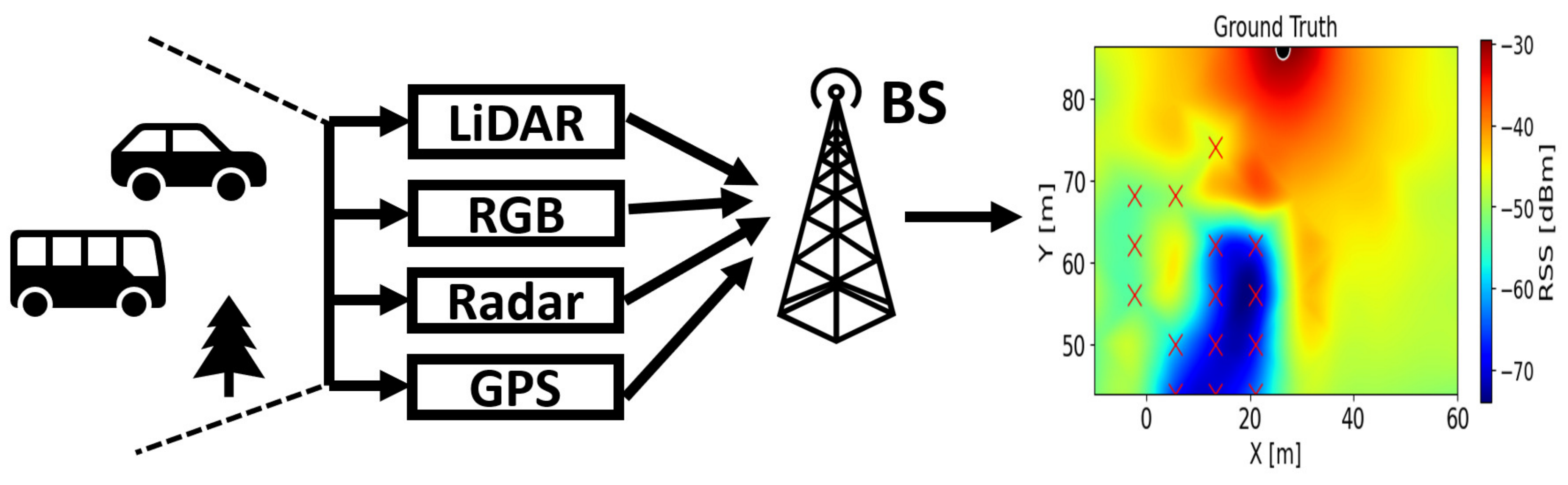}
		\end{center}
		\caption{An illustration of multi-modal based RSS map prediction.}
		\label{fig: use_case_2}
	\end{subfigure}\hfill
    \caption{An illustration of the considered use case and multi-modal based \ac{RSS} prediction.}
    \label{fig: use_case}
    \vspace{-0.3cm}
\end{figure*}

To validate the proposed frameworks in practical wireless networks scenarios, we present \ac{mmWave} \ac{RSS} prediction as a use case because it can leverage multi-modal sensors to capture surrounding objectives.

\subsection{System Overview}
We consider multi-modal sensor (camera, LiDAR, radar, GPS) based \ac{mmWave} \ac{RSS} map prediction in urban vehicular scenarios as shown in Fig. \ref{fig: use_case}. We construct a \ac{mmWave} radio environment map that can capture the dynamics of surrounding environment from multi-modal sensors. Hence, the constructed map can potentially be used to guide networking algorithms in dynamic scenarios such as resource allocation or network slicing.  %We assume $\numagents$ \ac{BS}s equipped with a uniform phased array (UPA) of $\antennanum$ antennas and a set $\mathcal{\vehiclenum}$ of $\vehiclenum$ vehicles distributed over the area.  Multimodal sensors including camera, LiDAR, radar, and GPS.
Based on a captured multi-modal input $\medmultimodalin_k$, each \ac{BS} $k$ predicts the current \ac{RSS} values $\medout_k$ over a set of receiver grids $\mathcal{\medoutputnum}$ of $\medoutputnum = \medoutputnum_x \times \medoutputnum_y$ with height $\rxheight$ of its coverage area using its model $\nn_k$. The \ac{RSS} of a receiver $(i, j)$, where $1\leq i \leq \medoutputnum_x$ and $1 \leq j \leq \medoutputnum_y$, significantly depends on whether it has an \ac{LoS} path or not. If $(i, j)$ has an \ac{LoS} path, its \ac{RSS} is largely governed by the distance to the serving \ac{BS} and first-order reflections \cite{li2024map, jaeckel2017explicit} as follows
\begin{align}
    \rss_{ij}  = \rss_{ij}^{\text{LoS}} \hspace{-0.75mm} +  \rss_{ij}^{\text{reflection}}, \ \text{if $(i,j)$ has an LoS path}, \label{rss_los}
\end{align}
where $\rss_{ij}^{\text{LoS}}$ is the \ac{RSS} from a direct beam between $(i, j)$ and the serving \ac{BS} and $\rss_{ij}^{\text{reflection}}$ is the \ac{RSS} from the reflected beams from surrounding obstacles and grounds. We model $\rss_{ij}^{\text{LoS}}$ using the 3GPP-Umi \ac{LoS} path loss modeling \cite{3gpp} as follows 
\begin{align}
    \pathloss_{ij}^\text{LoS} = 32.4 + 17.3\log_{10}(d_{ij}) + 20\log_{10}(\carrier)+ \shadow,
\end{align}
where $d_{ij}$ is the distance between  $(i, j)$ and its serving \ac{BS}, $\carrier$ is the carrier frequency normalized by the unit of GHz, and $\shadow$ is the shadowing factor. Then, $\rss_{ij}^{\text{LoS}}$ can be given as
\begin{align}
    \rss_{ij}^{\text{LoS}} = \txpower + \txgain + \rxgain - \pathloss_{ij}^{\text{LoS}},
\end{align}
where $\txpower$ is the transmission power and $\txgain$ and $\rxgain$ are the directional transmit and receiver antenna gains, respectively. If there is no \ac{LoS} path, the \ac{RSS} of $(i, j)$ will be highly dependent on the location of the receiver and the attenuation and diffraction values of blockages. Hence, for the \ac{NLoS} case, we model the \ac{RSS} as  $\rss_{ij}^{\text{LoS}}$ attenuated by blockages:
\begin{align}
    \rss_{ij} \hspace{-0.mm} = \hspace{-0.mm} \rss_{ij}^{\text{LoS}} \hspace{-0.mm} - \hspace{-0.mm} \rss_{ij}^{\text{blockage}} \hspace{-0.mm} , \ \text{if $\hspace{-0.mm} (i,j \hspace{-0.mm})$ has no LoS path}, \label{rss_nlos}
\end{align}
where $\rss_{ij}^{\text{blockage}}$ is the attenuated \ac{RSS} due to blockage.

We assume that each \ac{BS} knows the position of vehicles in its coverage area as estimated by basic safety messages broadcast by the vehicles \cite{bsm}. Next, we apply the physics-based training to the proposed use case.

\subsection{Physics-Based Training}
We first specify the $\medloss_{\text{total}}$ term as defined in \eqref{physics_loss}. 
For $\medloss_{\text{data}}(\medout_k, \nn_k(\medmultimodalin_k))$, we use an \ac{MSE} loss function between predicted \ac{RSS} and true \ac{RSS} as $\medloss_{\text{data}}(\medout_k, \nn_k(\medmultimodalin_k)) = \frac{1}{\medoutputnum} \sum_{(i,j) \in \mathcal{N}}
\left(
\widehat{\rss}_{k, ij} - \rss_{k, ij}
\right)^2$.
For $\medloss_{\text{phy}}(\medout_k, \nn_k(\medmultimodalin_k))$, we divide it into the case when receiver $(i,j), 1 \leq i \leq \medoutputnum_x$ and $1 \leq j \leq \medoutputnum_y $ has a \ac{LoS} path or only has \ac{NLoS} paths. 
We design $ \medloss_{\text{LoS}}(\nn_k(\medmultimodalin_k))$ to align the predictions of $\nn_k$ with the physical law of \ac{LoS} scenarios as follows:
\begin{align}
    \medloss_{\text{LoS}}(\nn_k(\medmultimodalin_k)) &= \frac{1}{|\mathcal{\medoutputnum}_{\text{LoS}}|} \sum_{(i,j) \in \mathcal{\medoutputnum}_{\text{LoS}}}
    \bigg[
    \rss_{k, ij}^{\text{blockage}} \ka
    &\quad + \max
    \left(
    0, \rss_{k ,ij}^{\text{reflection}} - \reflection_{k, ij}
    \right)
    \bigg], \label{los_loss}
\end{align}
where $\mathcal{\medoutputnum}_{\text{LoS}}$ is a set of receivers that have a \ac{LoS} path and $\reflection_{k, ij}$ is the bound of reflection gain $\rss_{k ,ij}^{\text{reflection}}$. Intuitively, for receivers with \ac{LoS} paths, the \ac{RSS} reduction term $\rss_{k, ij}^{\text{blockage}}$ should be minimized as there should be no blockage. We use a hinge loss $ \max
\left(
0, \rss_{k ,ij}^{\text{reflection}} - \reflection_{k, ij}
\right)$ to penalize too large \ac{RSS} gains from reflections. Here, $\reflection_{k, ij}$ is also a predicted value to bound $\rss_{k ,ij}^{\text{reflection}}$ based on the captured multi-modal input $\medmultimodalin_k$ on the location of receiver $(i, j)$. Similarly, we define $\medloss_{\text{NLoS}}(\nn_k(\medmultimodalin_k))$ to allow $\nn_k$ to learn the channel propagation of \ac{NLoS} cases as follows
\begin{align}
    \medloss_{\text{NLoS}}(\nn_k(\medmultimodalin_k)) &= \frac{1}{|\mathcal{\medoutputnum}_{\text{NLoS}}|} \sum_{(i,j) \in \mathcal{\medoutputnum}_{\text{NLoS}}}
    \bigg[
    \rss_{k, ij}^{\text{reflection}} \ka
    &\quad + \max
    \left(
    0, \blockage_{k, ij} - \rss_{k ,ij}^{\text{blockage}}
    \right)
    \bigg], \label{nlos_loss}
\end{align}
where $\mathcal{\medoutputnum}_{\text{NLoS}}$ is a set of receivers that only have \ac{NLoS} paths and $\blockage_{k, ij}$ is the bound of the \ac{RSS} reduction from blockages $\rss_{k, ij}^\text{blockage}$. We minimize $\rss^\text{reflection}$ to penalize large reflection gains for receivers in $\mathcal{\medoutputnum}_{\text{NLoS}}$. For \ac{mmWave} signals, blockages often decrease \ac{RSS} significantly. Hence, we train $\blockage_{k, ij}$ to penalize too small \ac{RSS} attenuation in $\mathcal{\medoutputnum}_{\text{NLoS}}$. Note that $\blockage_{k, ij}$ is also a predicted value based on $\medmultimodalin$.  The labels of  $\mathcal{\medoutputnum}_{\text{LoS}}$ and $\mathcal{\medoutputnum}_{\text{NLoS}}$ are available from ray-tracing results when generating ground-truth \ac{RSS}. We provide more detailed information on how we generate realistic datasets in Sec. \ref{sec: med_data_generation}.

\begin{figure}[!t]
    \centering
    \includegraphics[width=1.0\columnwidth]{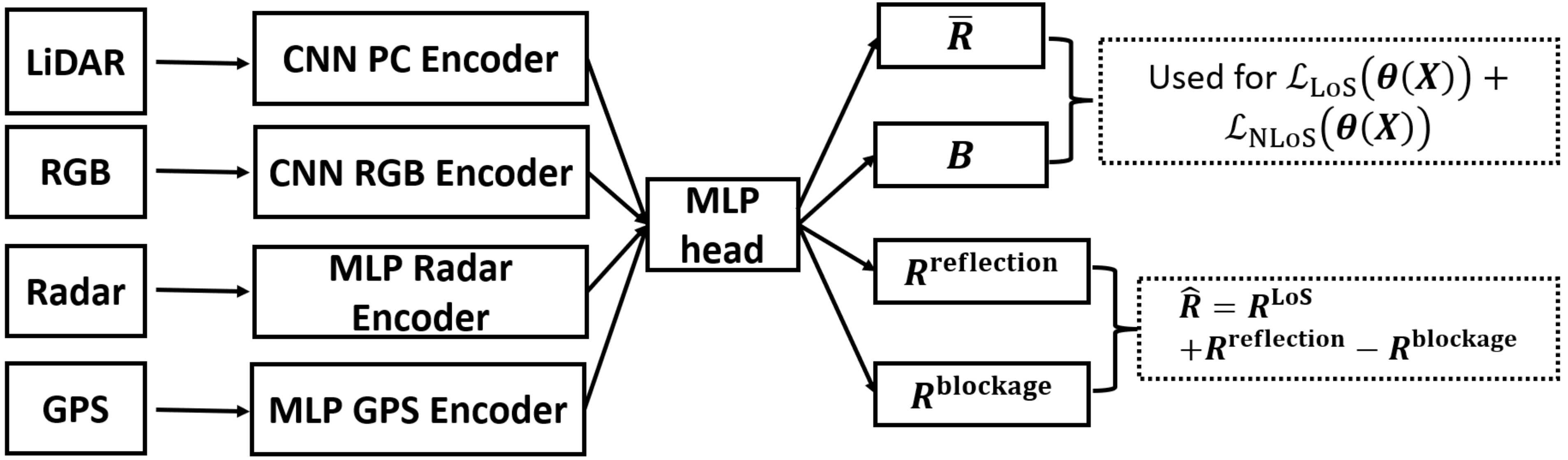}
    \captionsetup {singlelinecheck = false}    
    \vspace{-0.1cm}
    \caption{An illustration of the multi-modal model architecture.}
    \label{fig:model}    
    \vspace{-0.1cm}	
\end{figure}

Now, we present the architecture of our \ac{ML} model in Fig. \ref{fig:model} to process multi-modal input $\medmultimodalin$. The designed \ac{ML} outputs $\widehat{\rss}_{k, ij}, \reflection_{k, ij},$ and $\blockage_{k, ij}$ for $\forall k \in \mathcal{\numagents}$ and $\forall (i,j), 1 \leq i \leq \medoutputnum_x$ and $1 \leq j \leq \medoutputnum_y.$ To capture unique pattern in each modality, we design the corresponding encoders to process raw modality data. We utilize a small \ac{CNN} encoder for RGB inputs. For LiDAR, we first project point clouds into 2D dimension with bird's-eye view and use a \ac{CNN} encoder to extract transformed 3D features. We use the highest point sampling to pre-process radar inputs and encode them with \ac{MLP} layers. Similarly, an \ac{MLP} encoder is used to process GPS inputs since they are low dimension vectors. We provide more detail about the pre-processing in Section. \ref{sec: med_experiments}. We concatenate the output of each encoder and forward it to the \ac{MLP} head for the fusion. The \ac{MLP} head predicts values $\rss_{k, ij}^\text{reflection}, \rss_{k, ij}^\text{blockage}, \reflection_{k, ij},$ and $\blockage_{k, ij},$ $\forall (i,j), 1 \leq i \leq \medoutputnum_x$ and $1 \leq j \leq \medoutputnum_y$, using the shared encoded output from each encoder. We use all the predicted values during the training to calculate the physics-guided losses \eqref{los_loss} and \eqref{nlos_loss}. After training, each \ac{BS} $k$ predicts $\widehat{\rss}_{k, ij} = \rss^\text{LoS}_{k, ij} + \rss^\text{reflection}_{k, ij} - \rss^\text{blockage}_{k, ij} $ discarding $\reflection_{k, ij}$ and $\blockage_{k, ij}$. More detailed information of the pre-processing of each modality data and architecture is provided in Sec. \ref{sec: med_experiments}.

As described in Sec. \ref{subsec: collaborative_domain_adaptation}, after training, each \ac{BS} $k$ tracks the statistics of $\mean_k$ and $\covariance_k$ under its current domain. If \ac{BS} $k$ experiences a domain shift, it requests help from other agents to retrieve their models. Here, the communication overhead is negligible because of the throughput of typical backhauls of \ac{BS}s. For instance, the throughput of V-band can up to 10 Gbps \cite{tezergil2022wireless}. Subsequently, it measures the domain similarities of other \ac{BS}s and aggregates their models based on \eqref{aggregation}. Then, it adapts its model to the new samples for a few epochs. Next, we present how to generate multi-modal training data for the \ac{mmWave} \ac{RSS} prediction. 

\section{Multi-modal dataset generation for \ac{mmWave} RSS prediction} \label{sec: med_data_generation}
To validate the idea of the proposed framework with the use case, we need a multi-modal dataset that can induce covariate/concept shifts to multiple \ac{BS}s in dynamic wireless environments. To the best of our knowledge, there is no a such public dataset for multi-modal \ac{mmWave} \ac{RSS} prediction. We now present a multi-modal simulation framework to generate our training data. Our data generation framework is based on \cite{park2025resource}, which predicts \ac{mmWave} beamforming vectors, with the following changes. Firstly, we change the task as \ac{mmWave} \ac{RSS} prediction in the surrounding area of \ac{BS}s. We also include multiple \ac{BS}s for collaborative domain adaptation to tackle domain shifts. Meanwhile, \cite{park2025resource} is used for a single \ac{BS} without considering domain shifts.

The workflow of our simulation framework is illustrated in Fig. \ref{fig:data_generation}. We use the autonomous vehicular simulator CARLA \cite{dosovitskiy2017carla} to generate multi-modal sensing information and a virtual urban environment. Based on the generated urban environment and vehicles, we use MATLAB to do wireless communication experiments. Our framework has four main steps: CARLA environment setup, multi-modal sensing data generation, map reconstruction, and wireless communication experiments based on the generated environments. We integrate CARLA's multi-modal sensing ability and vehicular simulations into MATLAB's communication toolboxes. This approach provides flexibility to make domain shifts to vehicular environments  in CARLA and to see their impact on wireless environments in MATLAB.

\subsubsection{CARLA Environment Setup} We generate multiple vehicles and \ac{BS}s within the CARLA environment. We use CARLA's traffic control and navigation functions for autonomous movements of vehicles and traffic rules. Our environment setup enables dynamic blockage patterns and multi-modal sensing conditions to generate various domain shifts. 
\begin{itemize}
    \item \emph{\ac{BS} placements}: We place \acp{BS} at fixed points as roadside units.

    \item \emph{Vehicle placements}: We spawn vehicles in random locations around the generated \acp{BS}. Spawned vehicles follow pre-defined traffic rules and speed limits. 

    \item \emph{Environmental dynamics}: We control brightness and traffic density to simulate different covariate and concept shifts (e.g., day/night or heavier traffic). 
\end{itemize}

\begin{figure}[!t]
    \centering
    \includegraphics[width=1.0\columnwidth]{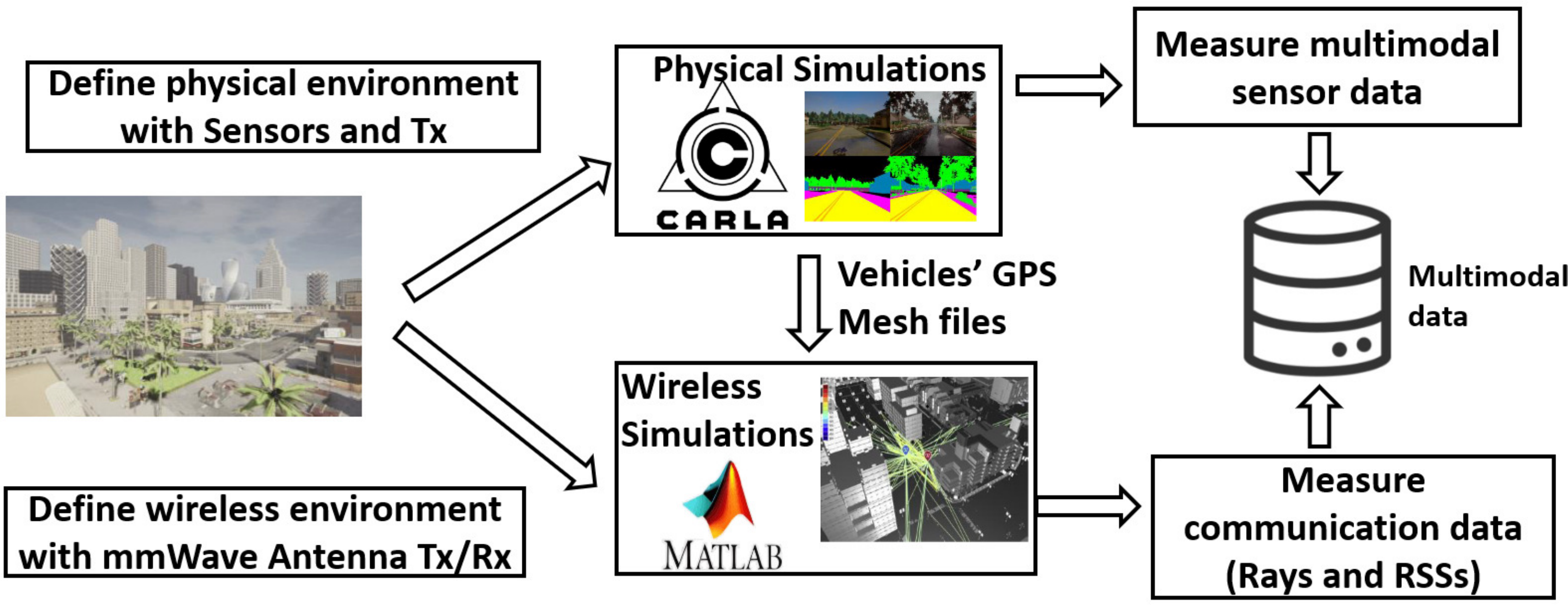}
    \captionsetup {singlelinecheck = false}

    \vspace{0.0cm}
    \caption{An illustration of the multi-modal simulation framework based on CARLA and MATLAB.}
    \label{fig:data_generation}

    \vspace{-0.0cm}	
\end{figure}

\subsubsection{Generating Multi-modal Sensing Data} At each \ac{BS}, we instantiate the following sensor modalities:
\begin{itemize}
    \item An RGB camera that captures the current frame to provide visual context such as blockage detection and object classification. 
    \item A LiDAR sensor that provides distance, object shape, and blocked region in 3D point clouds. 
    \item A radar that measures distance and velocity information of surrounding objects.
    \item A GPS that records vehicle coordinates and velocities. 
\end{itemize}
Based on multi-modal data, the \acp{BS} can capture the details of dynamic environments such as movements of vehicles, surrounding obstacles, and buildings. We synchronize all sensing data in a time frame. 

\subsubsection{Map Reconstruction} To do wireless communication experiments in MATLAB, we need to import the generated CARLA environments into MATLAB. We follow the steps in \cite{park2025resource} to make the CARLA environments compatible with MATLAB using Blender API \cite{conlan2017blender}. We first export the CARLA environment into a ply 3D file format. We then use the Blender API to script the conversion process to preserve the geometry, coordinates, and scales of the environments. The output of the Blender API is imported into MATLAB as a mesh file, which is consistent with the original CARLA environment.   

\subsubsection{Wireless Communication Simulation} We import the generated 3D environment file into MATLAB to perform ray-tracing to capture \ac{mmWave} channel propagation in the current environment. We set each \ac{BS} as a transmitter and locate multiple receivers over the grids of the coverage area of the \ac{BS}. We trace signal paths, path loss, reflection, diffraction, and \ac{LoS} existence of each receiver. The \ac{RSS} of each receiver is calculated for each received ray considering reflections, scattering, and diffraction. To enable dynamic environmental change, updated vehicle locations from CARLA are used to simulate ray-tracing. 

We combine the multi-modal sensor data from CARLA and the corresponding ray-tracing results including \ac{RSS} and \ac{LoS} labels. Specifically, our multi-modal dataset consists of:
\begin{itemize}
    \item Multi-modal Sensor Input: RGB, LiDAR, Radar, and GPS.
    \item Ray-tracing Results: \ac{RSS}, path loss, delay spread, angle of arrival, angle of departure, and \ac{LoS} existence of each receiver.
    \item Time and positional information: Time frame, vehicle IDs, and coordinates of vehicles, \acp{BS}, and receivers. 
\end{itemize}
We use the constructed dataset for the proposed physics-based training in Sec. \ref{subsec: physics_based_training} and collaborative domain adaptation in Sec. \ref{subsec: collaborative_domain_adaptation} to validate the proposed framework. 

In summary, our multi-modal data generation framework integrates autonomous vehicles and wireless communication simulations. We essentially generated dynamic urban environments to capture dynamic blockage and object patterns caused by moving vehicles. Then, we imported the CARLA environments to MATLAB to do ray-tracing. Our approach captures how the dynamics of the environment affect the \ac{mmWave} channel propagation. We use our multi-modal data to simulate the proposed use case in Sec. \ref{sec: med_use_case}. 

\section{Simulation Results and Analysis} \label{sec: med_experiments}
We generate our dataset using the `Town 10' map in CARLA to simulate an urban environment with up to 50 vehicles. We deploy five \acp{BS} along the map. We provide the example RGB frames of \acp{BS} 1 to 4 in Fig. \ref{fig: agents}. We set a 300 ms time interval between each time frame. We generate around 4K training datasets, 1K validation datasets called VAL-1, and another 2K validation datasets called VAL-2 for each \ac{BS}. For VAL-1, we induce new traffic patterns by making unseen blockages or obstacle patterns as shown in Fig. \ref{fig:agent1_4_con}, thereby inducing covariate and concept shifts. As such, new blockage patterns induce unseen mapping between $\medmultimodalin$ and $\medout$, and the distribution of $\medmultimodalin$ also changes. For \acp{BS} 1, 2, and 3, we induce new obstacle patterns by placing large vehicles at different locations, leading to concept shift-dominated domain shifts. For \acp{BS} 4 and 5, we induce stronger domain shifts by generating unseen large vehicles. For VAL-2, we decrease the brightness level by 25\% and inject zero mean Gaussian noises $\mathcal{N}(0, 1), \mathcal{N}(0, 0.5^2),$ and $\mathcal{N}(0, 0.25^2)$ to Radar, GPS, and LiDAR input data, respectively. For VAL-2, the induced domain shifts only change the distribution of inputs  $\P[\medmultimodalin]$ while $\P[\medout|\medmultimodalin]$ does not change. For the wireless communication simulations, we set the height of each \ac{BS} to 5.5 m and $\txpower = 25$ dBm \cite{ali2020passive, xu2022computer}. We assume each \ac{BS} is equipped with $4\times4$ UPA antennas. Receivers are distributed evenly over $\medoutputnum = 8\times8$ grids over $80$ m $\times$ $40$ m coverage area per each \ac{BS} \cite{narayanan2020lumos5g} to match the field-of-view of its camera. We assume receivers have a single antenna with $\rxheight = 1.5$ m.

\begin{figure*}[t!]
	\begin{subfigure}[t]{0.45\textwidth}
			\begin{center}   				%
				\includegraphics[width=\textwidth]{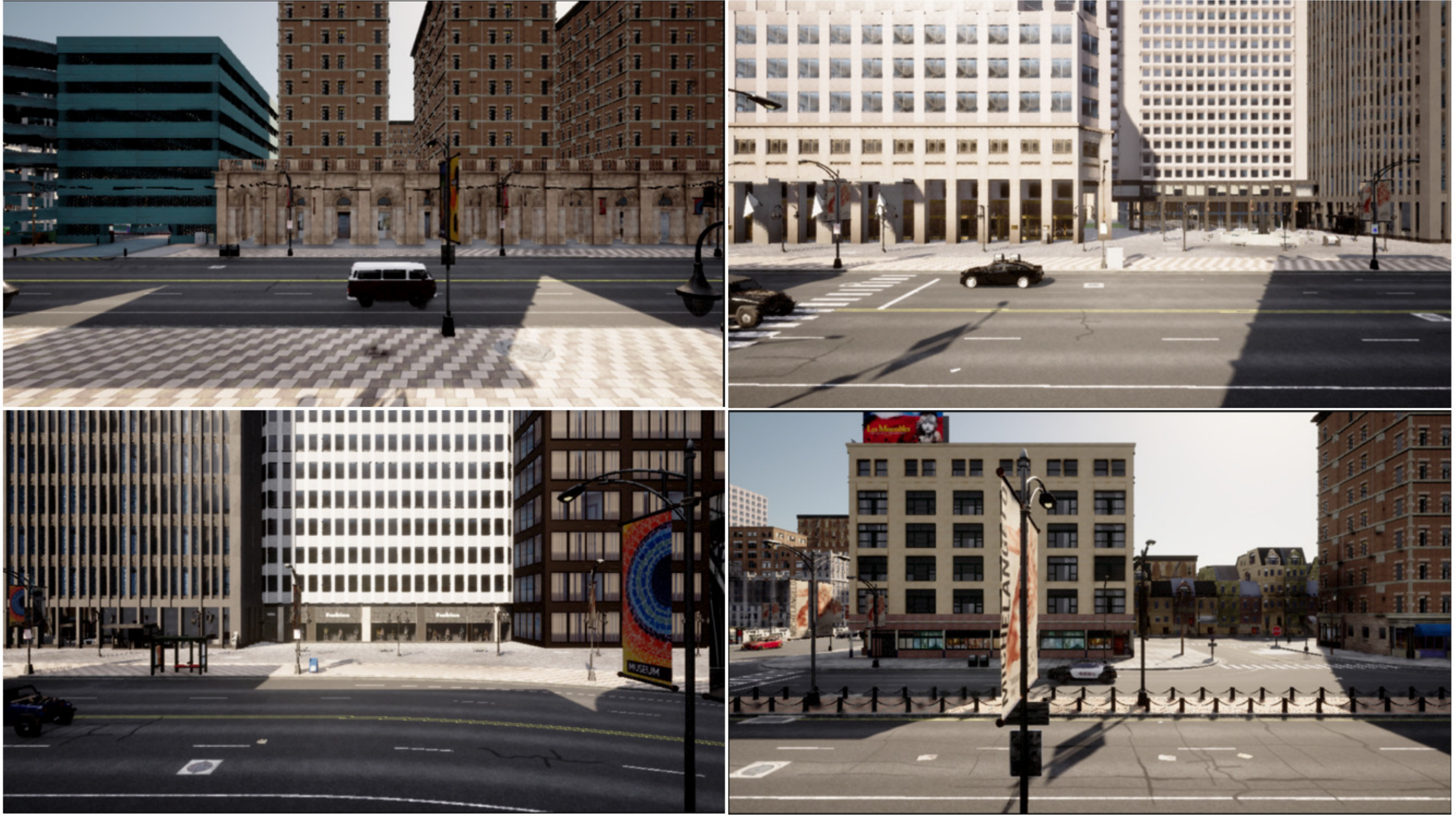}
		\end{center}
		\caption{Example RGB samples of \ac{BS} 1 to 4 from training datasets.}
		\label{fig: agent1_4}
	\end{subfigure}\hfill
	\begin{subfigure}[t]{0.45\textwidth}
        \begin{center}   
		\includegraphics[width=\textwidth]{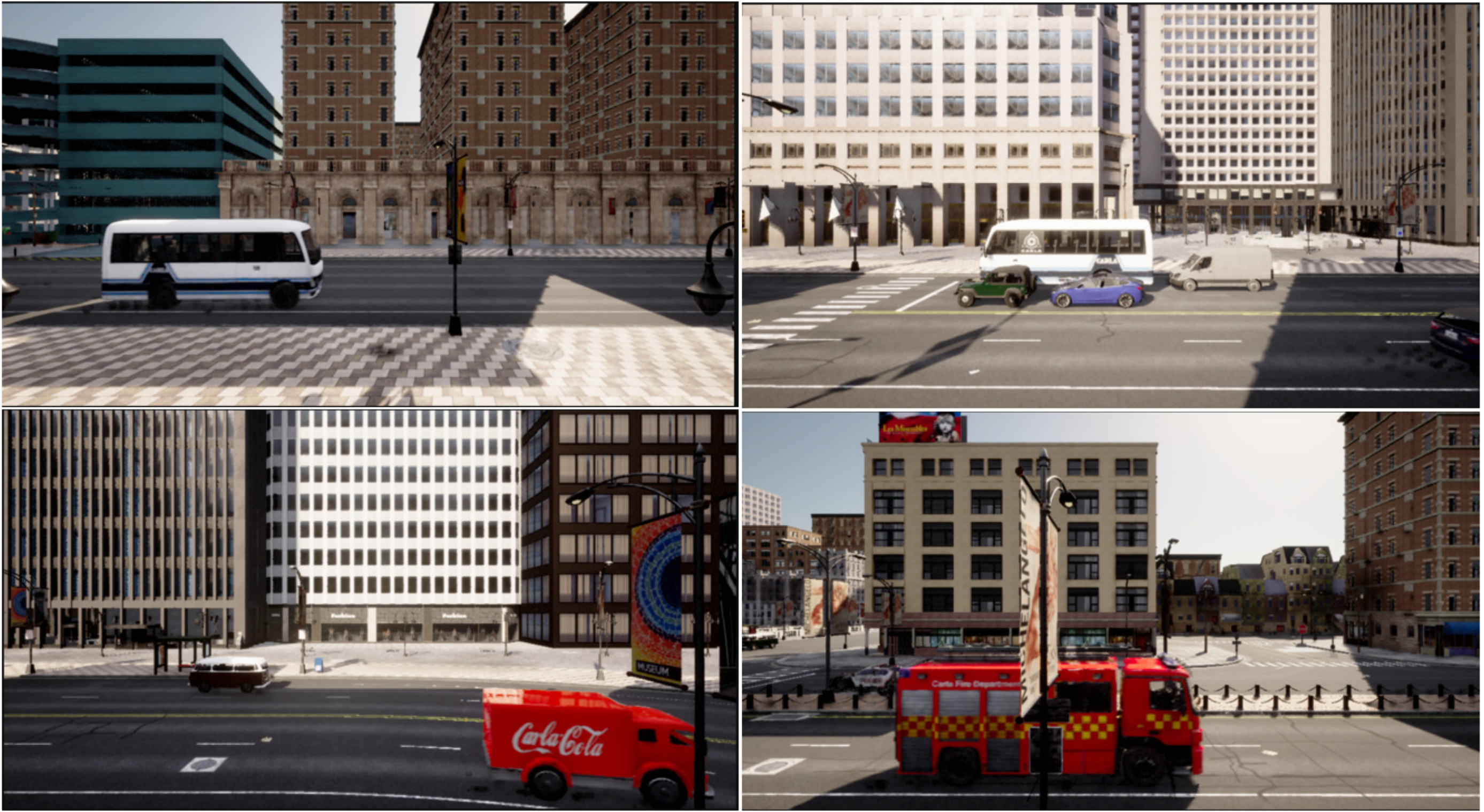}
		\end{center}
		\caption{Example RGB samples of \ac{BS} 1 to 4 from VAL-1 datasets.}
		\label{fig:agent1_4_con}
	\end{subfigure}\hfill
    \caption{Example RGB samples of \ac{BS} 1, 2, 3, and 4}
    \label{fig: agents}
\end{figure*}

We first pre-process each modality and forward it to the corresponding encoder as follows
\begin{itemize}
    \item \emph{LiDAR}: We project raw LiDAR point clouds [x, y, z, intensity] onto a 2D plane. This bird's-eye view transformation provides the locations and positions of surrounding objects while filtering out ground reflections \cite{park2025resource}. We use four convolutional layers and one \ac{MLP} layer to encode the pre-processed point clouds input. 

    \item \emph{Radar}: We use the highest point sampling to pre-process radar input [velocity, azimuth, altitude, depth]. Hence, we down-sample radar inputs to a fixed size. We use three \ac{MLP} layers with the max-pooling to encode the pre-processed radar inputs. 

    \item \emph{GPS}: We record GPS inputs as a csv format. We use three \ac{MLP} layers with the max-pool to encode GPS inputs. 

    \item \emph{RGB}: Camera images are first normalized and scaled to $256 \times 256$ resolution. We use a pre-trained ResNet-18 to encode RGB inputs. 
\end{itemize}
We concatenate all the encoded modality inputs and learnable receiver embedding to differentiate each receiver. We use three \ac{MLP} layers as our head for the prediction. 

We first train a model of each \ac{BS} using its training dataset following \eqref{physics_loss} and test on its VAL-1 and VAL-2 datasets. We average all results over five random seeds. We implement our training in PyTorch. We use 50 epochs with a batch size of 64 and set the initial learning rate as $1.0 \times 10^{-4}$ with the learning rate decay at (10, 30) epochs. We set $\phycoeff = 0.5$ for our physics-based loss. We consider three baselines for comparison purposes. \textbf{Baseline 1} predicts \ac{RSS} values directly without utilizing any physics prior information \eqref{rss_los}-\eqref{nlos_loss}. As such, Baseline 1 is trained only with \ac{MSE} loss, which is widely used for regression tasks \cite{jiao2025addressing, yao2022c, narayanan2020lumos5g}. \textbf{Baseline 2} \cite{seretis2022toward} penalizes \ac{LoS} and \ac{NLoS} receiver points differently when calculating \ac{MSE} loss function, by giving more weights to \ac{NLoS} points. We scale the loss of \ac{NLoS} points by 20\% more. \textbf{Baseline 3} \cite{li2024map} uses 3GPP path loss formulas for \ac{RSS} of \ac{LoS} points while using regression for \ac{NLoS} for input features.     We report \ac{MAE} and \ac{RMSE} for performance metrics.

\begin{figure*}[t!]
	\begin{subfigure}[t]{0.45\textwidth}
			\begin{center}   				%
				\includegraphics[width=\textwidth]{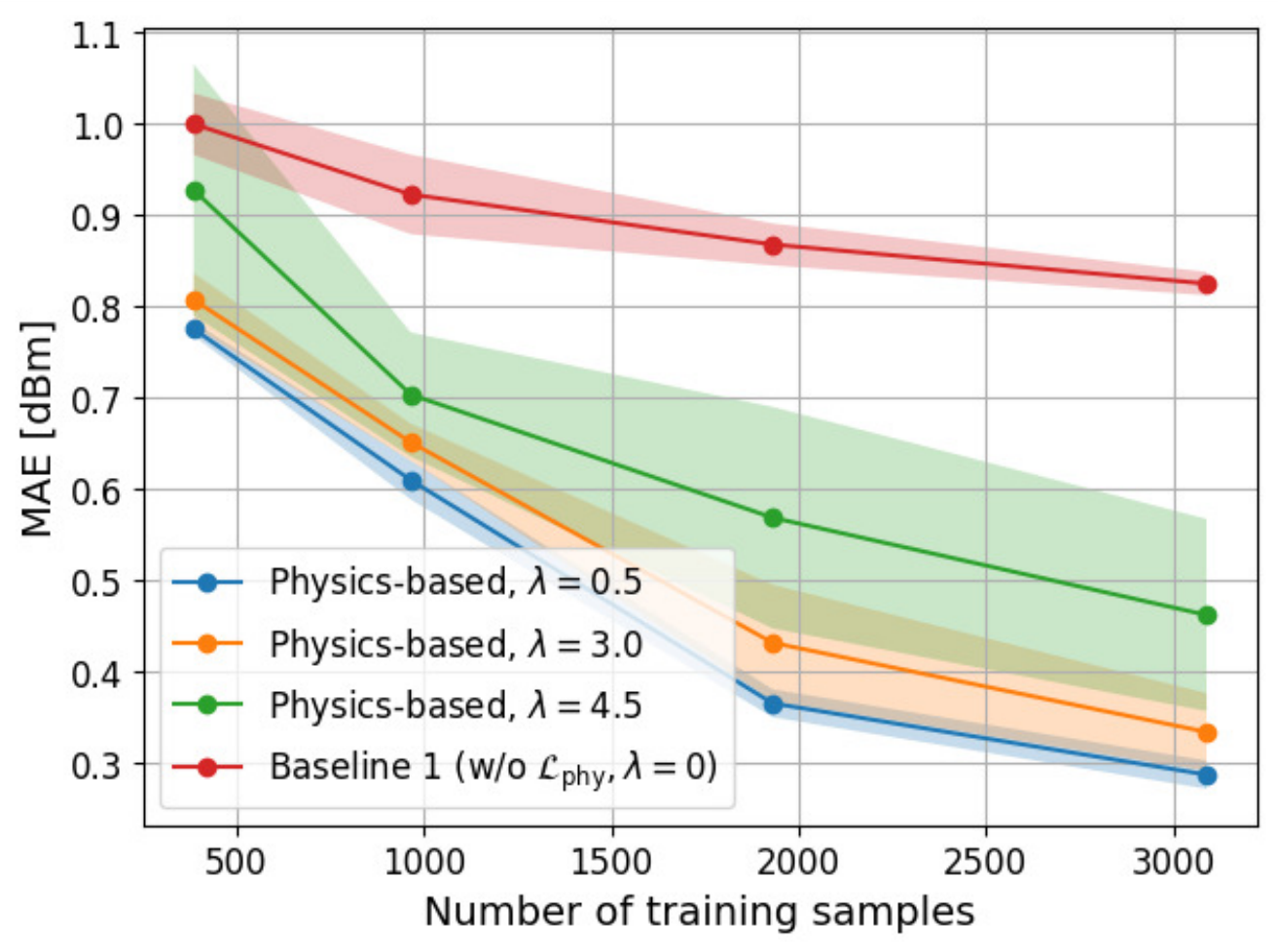}
		\end{center}
		\caption{\ac{MAE} on VAL-1 with respect to the amount of training data of \ac{BS} 1.}
		\label{fig: agent1_MAE}
	\end{subfigure}\hfill
	\begin{subfigure}[t]{0.45\textwidth}
        \begin{center}   
		\includegraphics[width=\textwidth]{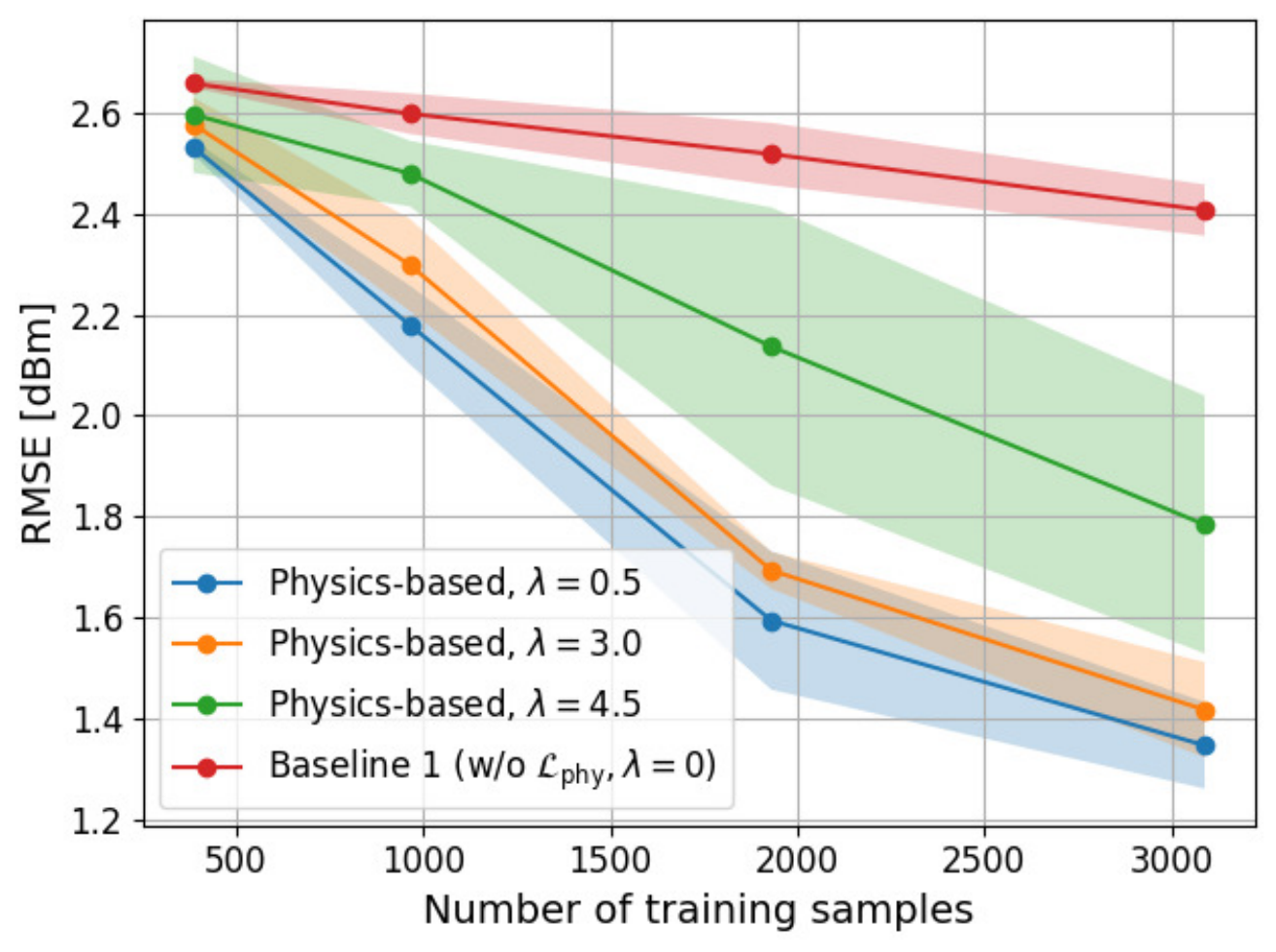}
		\end{center}
		\caption{\ac{RMSE} on VAL-1 with respect to the amount of training data of \ac{BS} 1.}
		\label{fig: agent1_RMSE}
	\end{subfigure}\hfill
    \caption{Performance of \ac{BS} 1 with respect to the amount of training data.}
    \label{fig: agent1}
\end{figure*}

Figure \ref{fig: agent1} shows the performance of the trained model of \ac{BS} 1 for an increasing amount of training data on its VAL-1 dataset. As shown in Fig. \ref{fig: agent1_MAE}, the VAL-1 induces concept shift-dominated domain shifts due to unfamiliar blockage and obstacle patterns from the training dataset. We can see that our physics-based model outperforms Baseline 1 in terms of the generalization performance and data efficiency. For the \ac{MAE}, the proposed model already achieves around 0.8 dBm of MAE with only 12.5\% of the raining data used by Baseline 1. Compared to Baseline 1, our physics-based model achieves faster improvement for increasing the amount of the training data. The physics-based loss function limits the available hypothesis space to a set that follows the designed physics law during training. This reduces the possible search space, and makes model parameters generalize robustly across unseen domains. Meanwhile, as $\phycoeff$ increases, the performance decreases since we regularize $\nn$ too strongly with $\medloss_{\text{phy}}$. Hence, the available search space becomes too limited for $\medloss_{\text{data}}$, leading to sub-optimal performance. Therefore, Figure \ref{fig: agent1} corroborates Theorem \ref{theorem:1}. 

For the \ac{RMSE}, we observe the same trend as the \ac{MAE} as shown in Fig. \ref{fig: agent1_RMSE}. The proposed model outperforms Baseline 1 with respect to the generalization performance and data efficiency. We can see that our model achieves around 2.2 dBm of \ac{RMSE} using only 31\% of the training data used by Baseline 1. By the definition of \ac{RMSE}, it penalizes large tails of \ac{MSE} values. In our dataset, receivers with only \ac{NLoS} paths have significantly low \ac{RSS} compared to \ac{LoS} receivers. Hence, we can see that the proposed model predicts \ac{RSS} of \ac{NLoS} areas more robustly than Baseline 1.

\begin{figure*}[t!]
	\begin{subfigure}[t]{0.45\textwidth}
		\centering   				%
		\includegraphics[width=\textwidth]{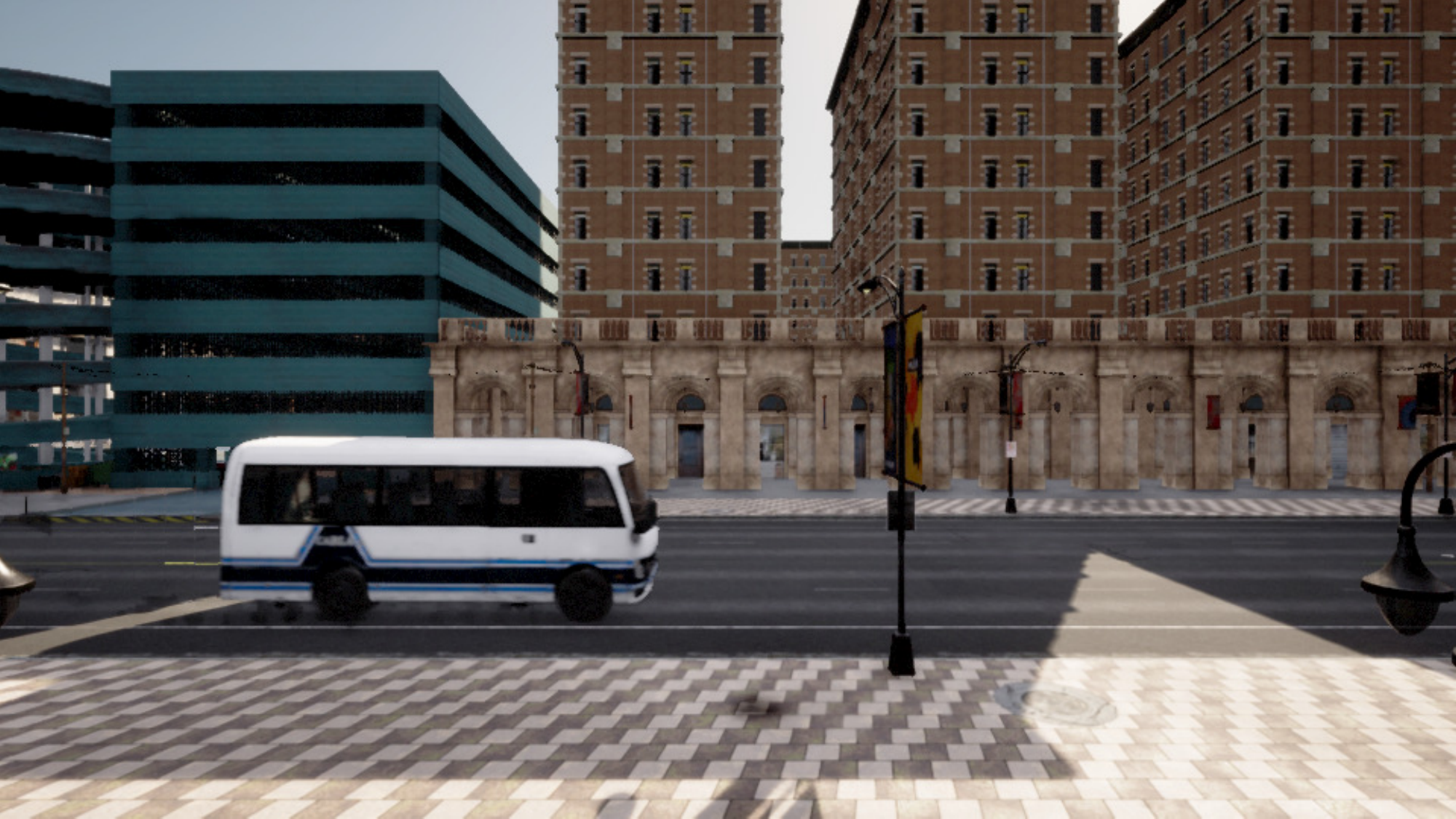}
		\caption{RGB frame captured by \ac{BS} 1's camera.}
		\label{fig: agent1_visual_frame}
	\end{subfigure}\hfill
    \begin{subfigure}[t]{0.45\textwidth}
		\centering   				%
		\includegraphics[width=\textwidth]{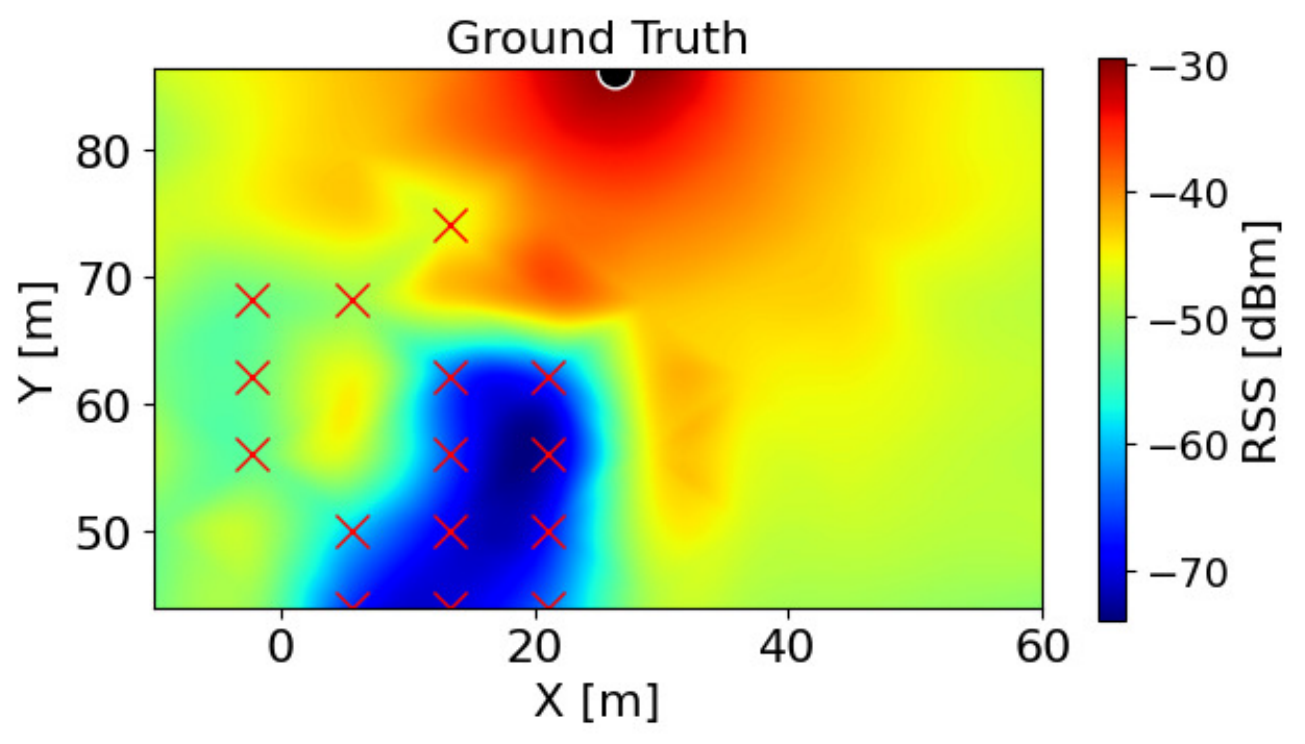}
		\caption{Corresponding true \ac{RSS} map of \ac{BS} 1.}
		\label{fig: agent1_visual_true}
	\end{subfigure}
	\vspace{1em}
	\begin{subfigure}[t]{0.45\textwidth}
        \centering
		\includegraphics[width=\textwidth]{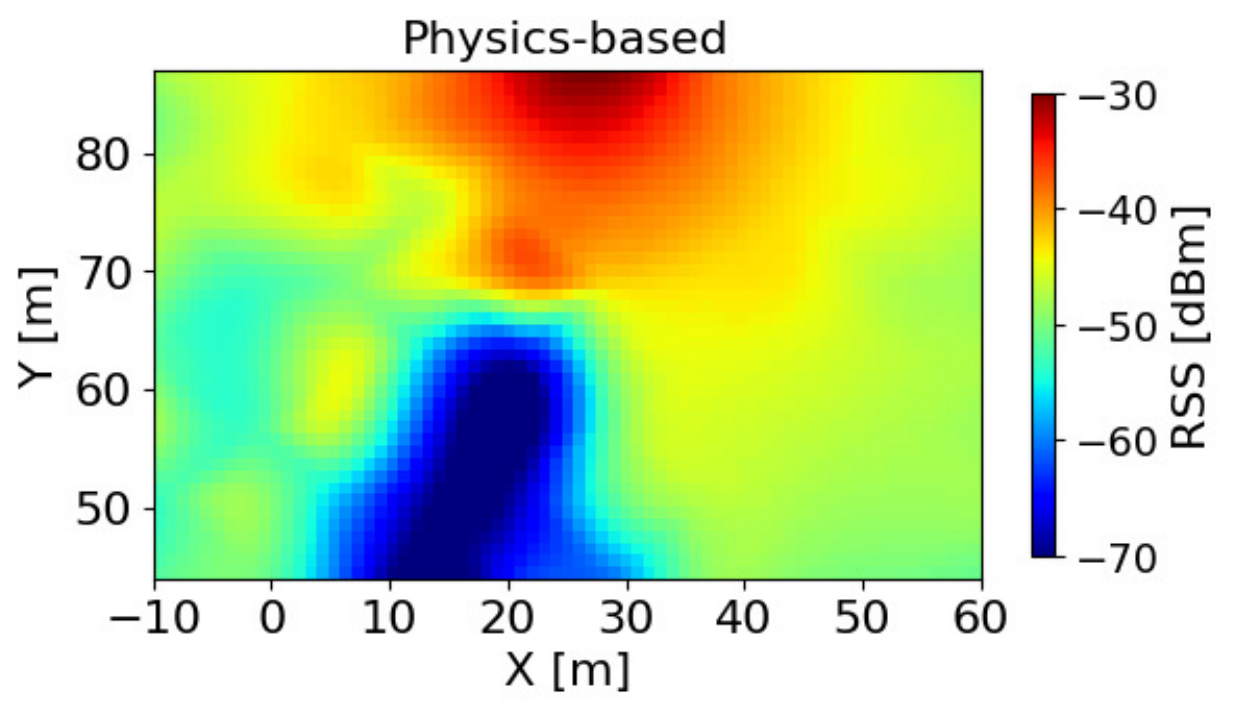}
		\caption{Predicted \ac{RSS} map of the physics-based model}
		\label{fig:agent1_visual_pred}
	\end{subfigure}\hfill
	\begin{subfigure}[t]{0.45\textwidth}
		\centering
		\includegraphics[width=\textwidth]{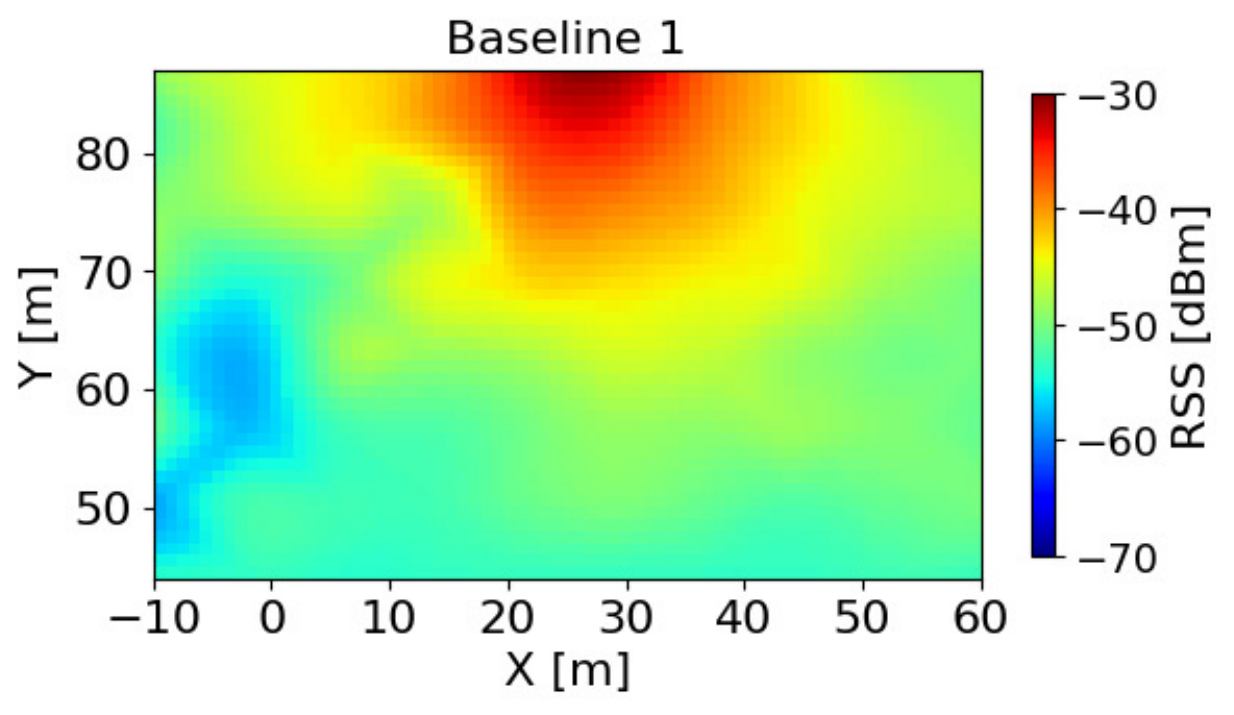}
	    \caption{Predicted \ac{RSS} map of the baseline model.}
	    \label{fig:agent1_visual_pred_baseline}
    \end{subfigure}
	    				\vspace{-0.3cm}
    \caption{Visualization of predicted \ac{RSS} maps of the physics-based and baseline models}
    \label{fig: agent1_visual}
\end{figure*}

Figure \ref{fig: agent1_visual} provides a visualization of the predicted \ac{RSS} maps of the proposed model and Baseline 1. Fig. \ref{fig: agent1_visual_frame} shows the RGB frame from the camera of \ac{BS} 1, and Fig. \ref{fig: agent1_visual_true} shows the ground truth \ac{RSS} map, where the red X represents receiver locations with only \ac{NLoS} paths. We can see that the bus creates a significant \ac{NLoS} area. This data sample is from the VAL-1 dataset. Hence, this frame is an unseen blockage pattern for both trained models. Figs. \ref{fig:agent1_visual_pred} and \ref{fig:agent1_visual_pred_baseline} show the predicted \ac{RSS} maps of the physics-based and Baseline 1 models, respectively. We can see that our model can accurately predict the \ac{NLoS} area caused by the bus even though this blockage pattern was not shown during the training. Meanwhile, Baseline 1 cannot predict the \ac{NLoS} region correctly. Although multi-modal data can provide richer information about the environment, vanilla training with \ac{MSE} loss still learns superficial features rather than learning invariant rules, thereby being sensitive to concept shifts.

\begin{figure*}[t!]
	\begin{subfigure}[t]{0.45\textwidth}
			\begin{center}   				%
				\includegraphics[width=\textwidth]{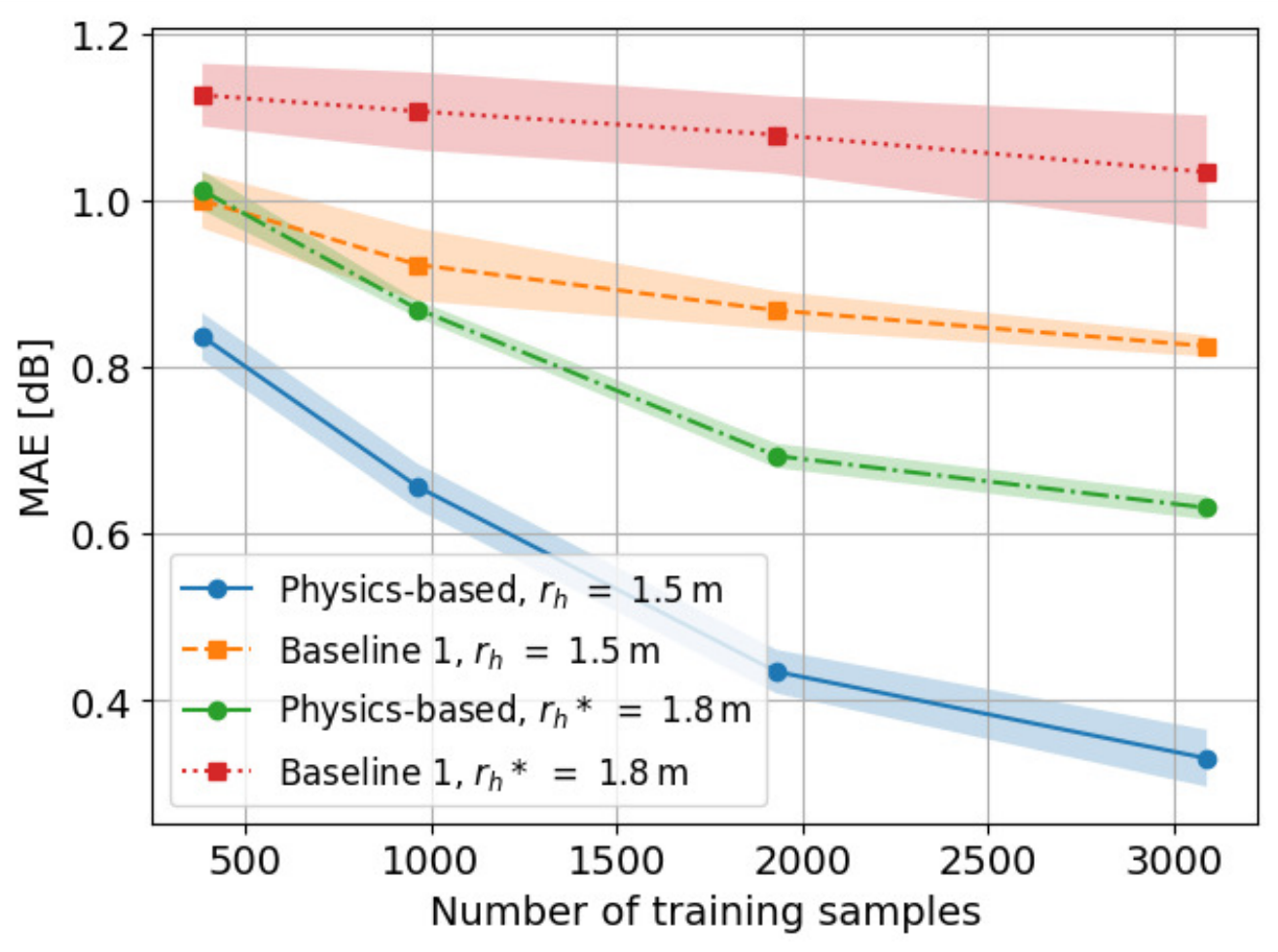}
		\end{center}
		    				\vspace{-0.3cm}
		\caption{\ac{MAE} with respect to the amount of training data of \ac{BS} 1 with unseen receiver height.}
		\label{fig: agent1_rx_1.8_MAE}
	\end{subfigure}\hfill
	\begin{subfigure}[t]{0.45\textwidth}
        \begin{center}   
		\includegraphics[width=\textwidth]{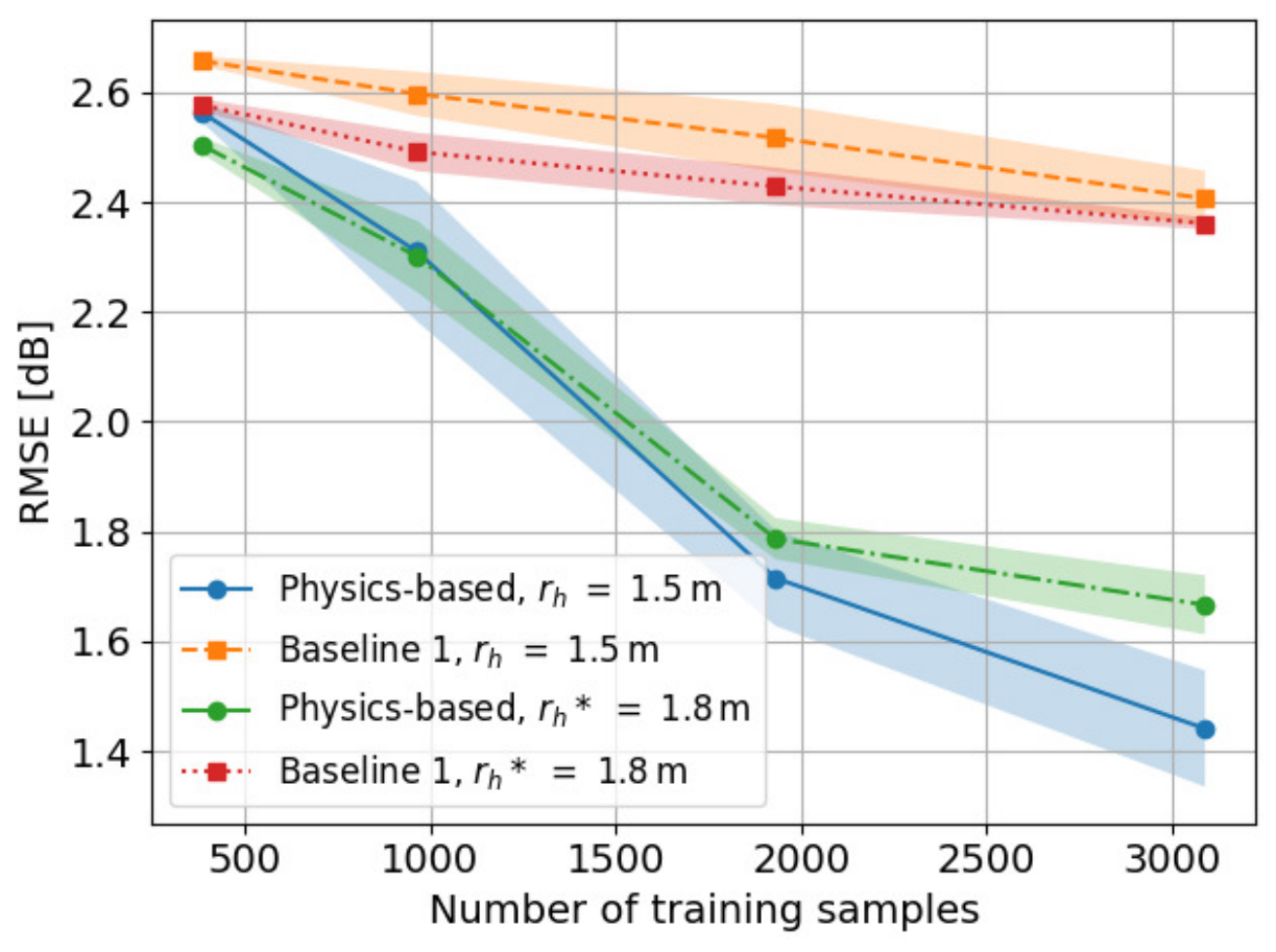}
		\end{center}
		    				\vspace{-0.3cm}
		\caption{\ac{RMSE} with respect to the amount of training data of \ac{BS} 1 with unseen receiver height.}
		\label{fig: agent1_rx_1.8_RMSE}
	\end{subfigure}\hfill
	    				\vspace{-0.2cm}
    \caption{Performance of \ac{BS} 1 with respect to the amount of training data with unseen receiver height.}
    \label{fig: agent1_rx_1.8}
    	     \vspace{-0.5cm}
\end{figure*}

Figure \ref{fig: agent1_rx_1.8} shows the performance of \ac{BS} 1 when receivers with unseen height appear. For the VAL-1 dataset, we set receiver height $\rxheight = 1.8$ m while we train models with $\rxheight = 1.5$ m. We mark the cases with the unseen receiver height as $\rxheight^* = 1.8$ m and the cases with the same receiver height as $\rxheight = 1.5$ m for the comparison. We can see that the performance of both the physics-based model and Baseline 1 degrades with the unseen receiver height. However, we can see that the \ac{RMSE} of the physics-based model is still close to the case when tested with $\rxheight = 1.5$ m. Hence, we can know that the proposed model still predicts \ac{NLoS} regions robustly even with unseen receiver heights.

\begin{figure*}[t!]
	\begin{subfigure}[t]{0.45\textwidth}
			\begin{center}   				%
				\includegraphics[width=\textwidth]{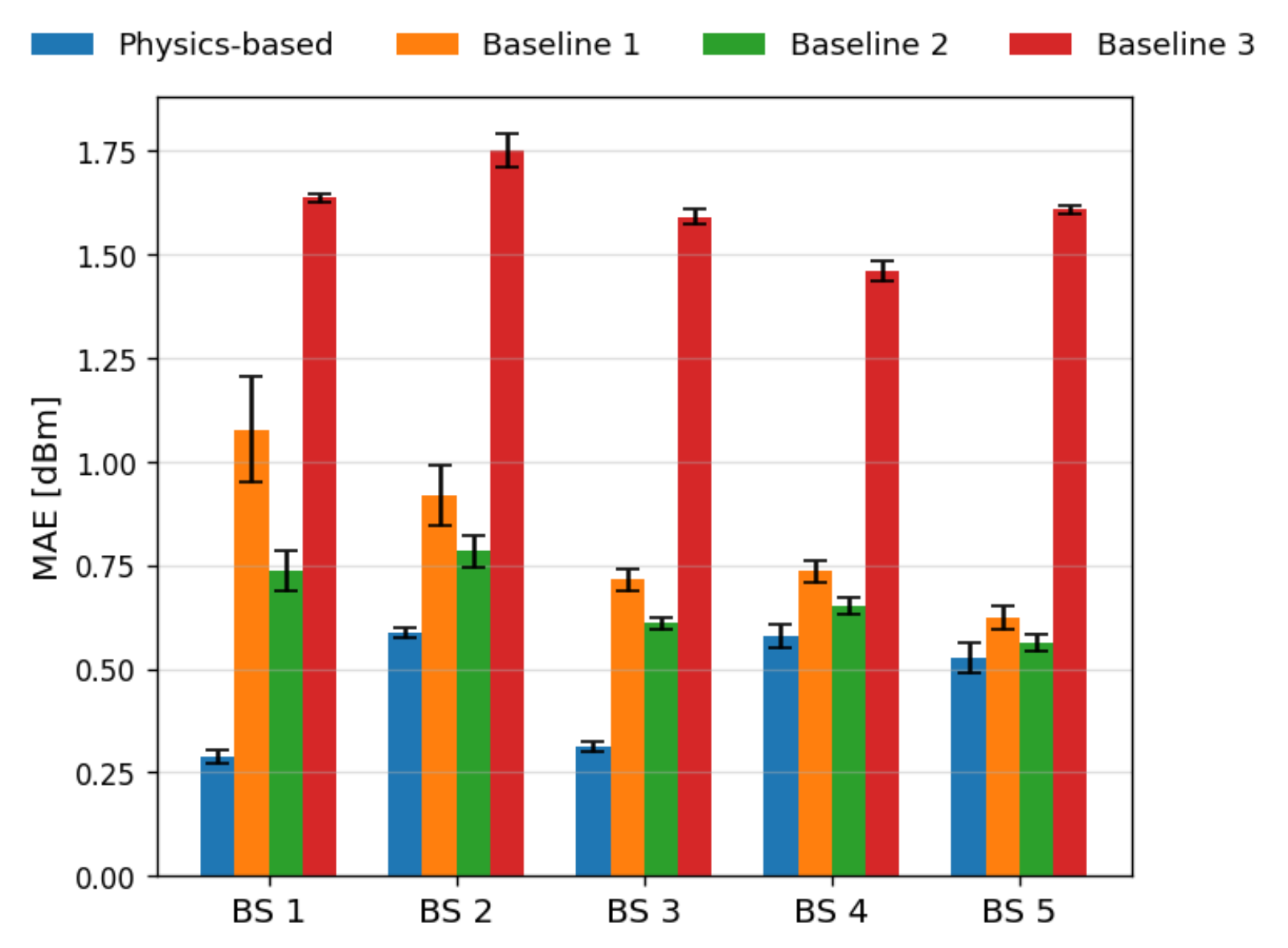}
		\end{center}
				\vspace{-0.3cm}
		\caption{\ac{MAE} of \acp{BS} on VAL-1 datasets.}
		\label{fig: agents_MAE_concept}
	\end{subfigure}\hfill
	\begin{subfigure}[t]{0.45\textwidth}
        \begin{center}   
		\includegraphics[width=\textwidth]{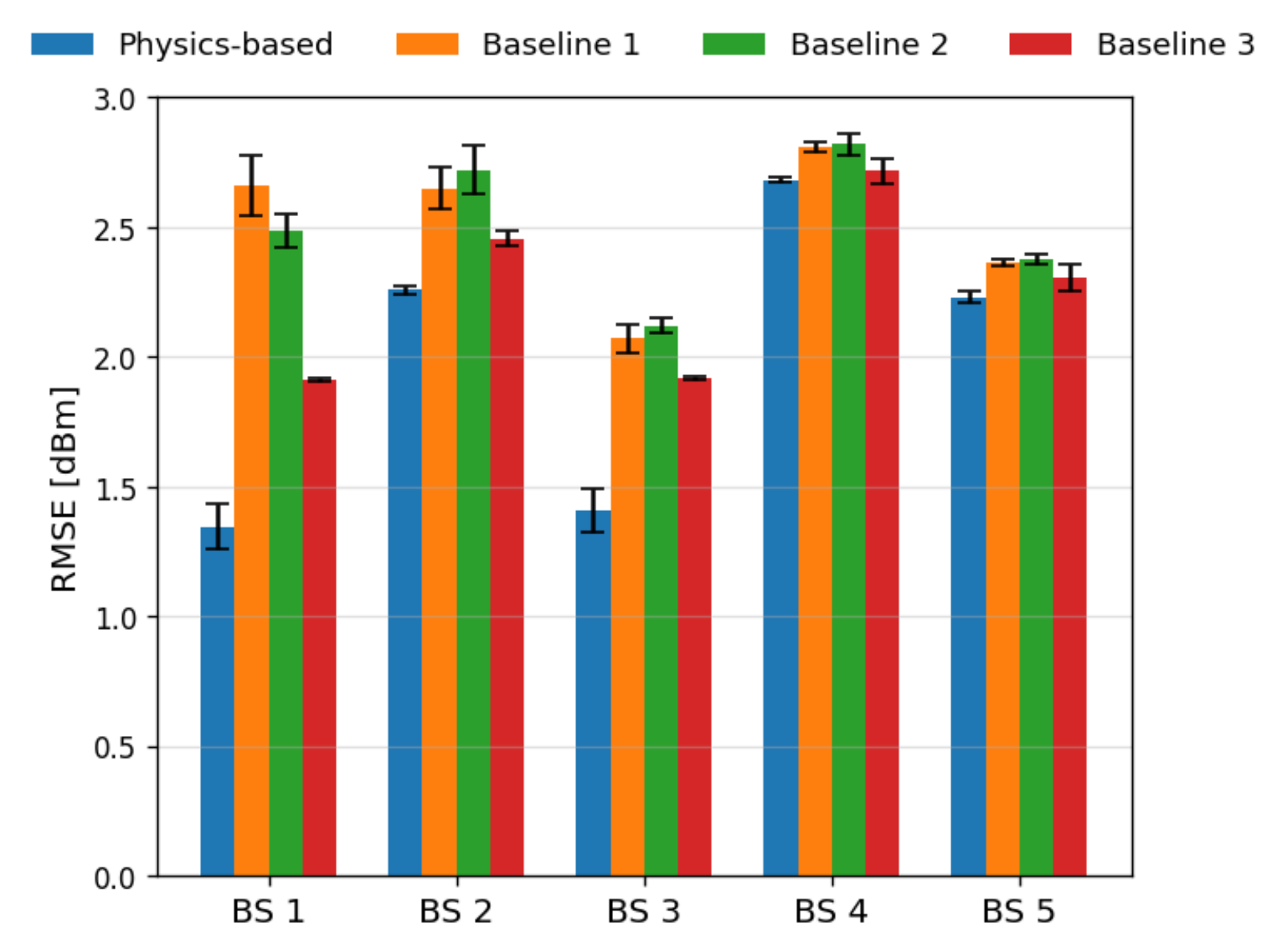}
		\end{center}
						\vspace{-0.3cm}
		\caption{\ac{RMSE} of \acp{BS} on VAL-1 datasets. }
		\label{fig: agents_RMSE_concept}
	\end{subfigure}\hfill
	    \vspace{-0.2cm}
    \caption{Performance of \acp{BS} on VAL-1 datasets.}
    \label{fig: agents_concept}
    	     \vspace{-0.3cm}
\end{figure*}

\begin{figure*}[t!]
	\begin{subfigure}[t]{0.45\textwidth}
		\begin{center}   				%
			\includegraphics[width=\textwidth]{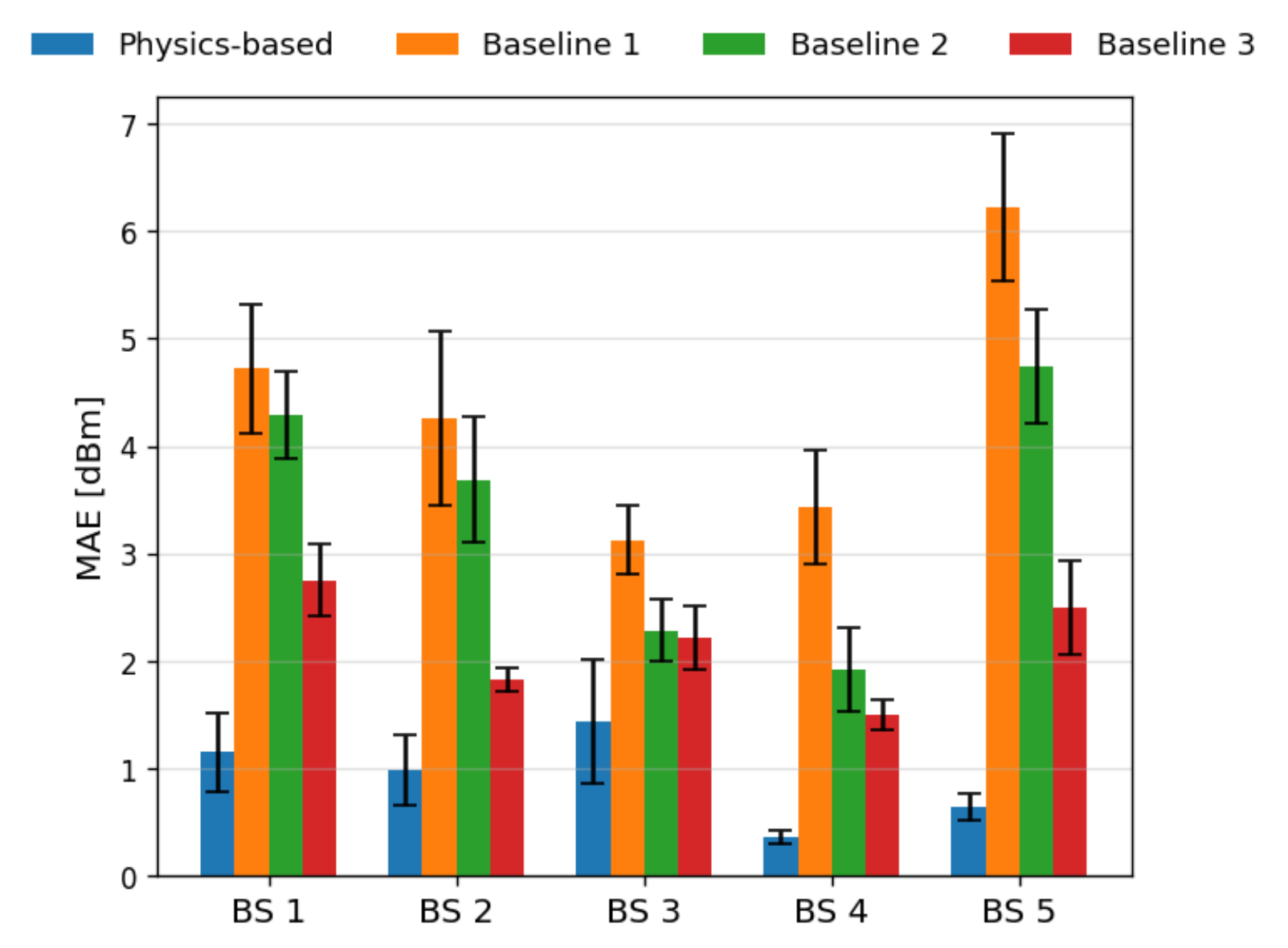}
		\end{center}
		\vspace{-0.3cm}
		\caption{\ac{MAE} of \acp{BS} on VAL-2 datasets.}
		\label{fig: agents_MAE_covariate}
	\end{subfigure}\hfill
	\begin{subfigure}[t]{0.45\textwidth}
		\begin{center}   
			\includegraphics[width=\textwidth]{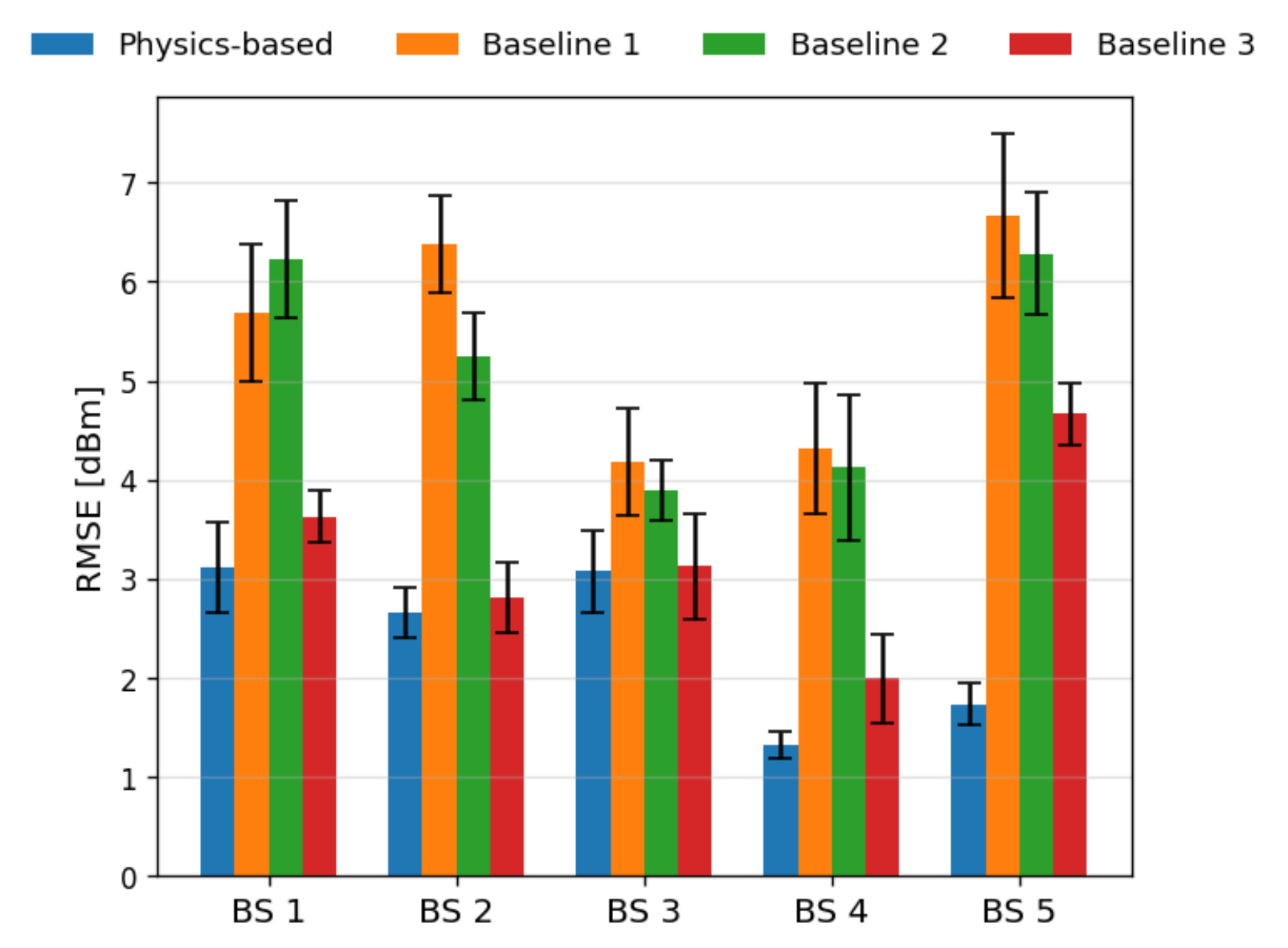}
		\end{center}
				\vspace{-0.3cm}
		\caption{\ac{RMSE} of \acp{BS} on VAL-2 datasets. }
		\label{fig: agents_RMSE_covariate}
	\end{subfigure}\hfill
			\vspace{-0.2cm}
	\caption{Performance of \acp{BS} on VAL-2 datasets.}
	\label{fig: agents_covariate}
		     \vspace{-0.3cm}
\end{figure*}

In Figs. \ref{fig: agents_concept}, we present the performance of our models and baselines on the VAL-1 datasets. The VAL-1 datasets induce both concept and covariate shifts. We can see that our proposed approach is more robust to the domain shifts than the baselines. In Fig. \ref{fig: agents_concept}, our models improve the \ac{MAE} and \ac{RMSE} by $44\%$ and $21\%$, respectively, compared to those of Baseline 1. For \acp{BS} 4 and 5, the performance is lower than that of other \acp{BS} because we induced stronger concept shifts by generating unseen large vehicles.  Baseline 2 can achieve better \ac{MAE} than Baseline 1 as it divides the \ac{MSE} loss function into \ac{LoS}/\ac{NLoS}, but its \ac{RMSE} is similar to that of Baseline 1. This is because Baseline 1 and 2 do not use any physical information for training. Our models improve the \ac{MAE} and \ac{RMSE} by 32\% and 21\%, respectively, compared to those of Baseline 2. Meanwhile,  Baseline 3 achieves higher \ac{MAE} on the VAL-1 than other baselines because it relies on only the 3GPP path loss for \ac{LoS} points, making it vulnerable to concept shifts. The \ac{MAE} and \ac{RMSE} of our models are smaller than those of Baseline 3 by 72\% and 12\%, respectively.

Fig. \ref{fig: agents_covariate} presents the performance of our models and baselines on the VAL-2 datasets, which induce only covariate shifts. We can see that our approach is significantly more robust to covariate shifts than the baselines. Baselines 1 and 2 are susceptible to covariate shifts because they focus on noise in multi-modal input rather than learning domain-invariant knowledge.  The \ac{MAE} and \ac{RMSE} of our models are smaller than those of Baseline 1 by $79\%$ and $56\%$, respectively. For Baseline 2, we can improve the \ac{MAE} and \ac{RMSE} by 73\% and 54\%, respectively. We can observe that Baseline 3 is more robust to covariate shifts than other baselines because it the 3GPP path loss for \ac{LoS} points. Hence, covariate shifts does not affect its prediction for \ac{LoS} points. However, Baseline 3 still shows the degradation for \ac{NLoS}. Our models can improve the \ac{MAE} and \ac{RMSE} by 58\% and 27\%. respectively, compared to Baseline 3. Hence, physics-based training can improve robustness to both covariate and concept shifts.

\begin{figure}[t]
		\centering 				%
		\begin{subfigure}[t]{0.8\linewidth}
			\centering
			\includegraphics[width=\textwidth]{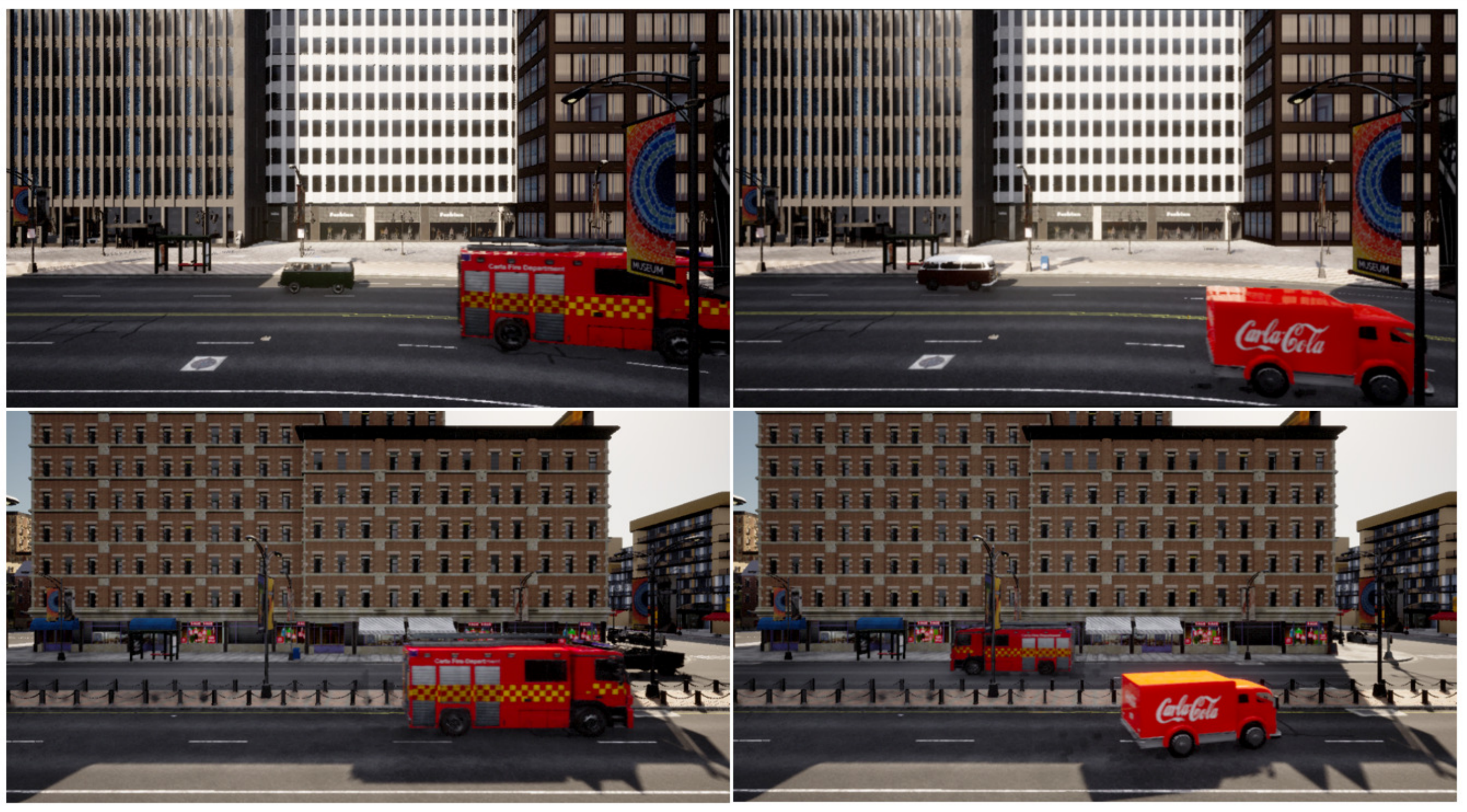}
		\end{subfigure}
		    \vspace{-0.2cm}
			\caption{First row: training data of \ac{BS} 3. Second row: VAL-1 dataset of \ac{BS} 5}
\label{fig: tta_example}
\vspace{-0.5cm}
\end{figure}

\begin{figure*}[t!]
	\begin{subfigure}[t]{0.45\textwidth}
			\begin{center}   				%
				\includegraphics[width=\textwidth]{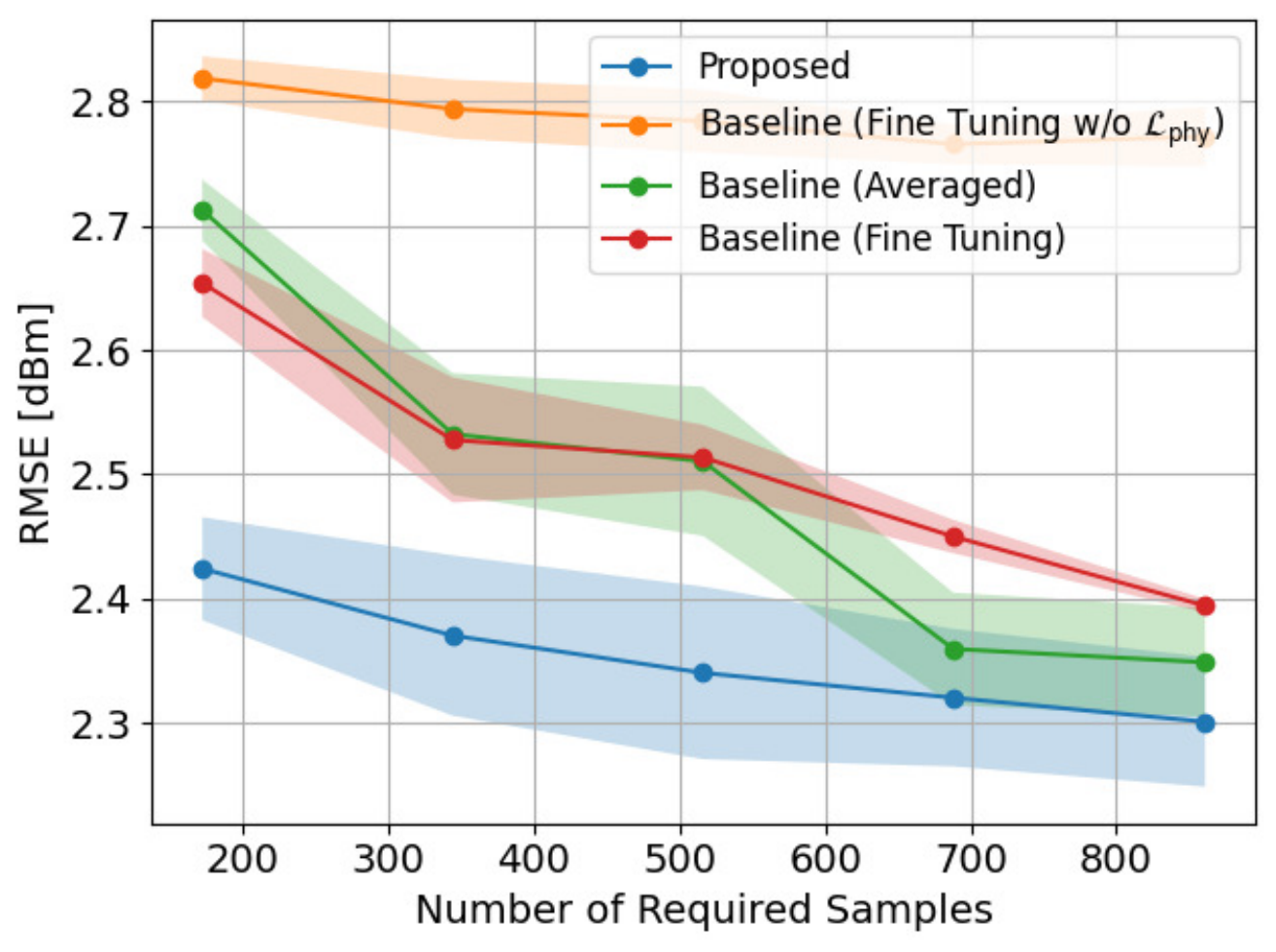}
		\end{center}
		    \vspace{-0.2cm}
		\caption{\ac{RMSE} of \ac{BS} 4 with respect to the amount of data.}
		\label{fig: agent4_tta_MAE}
	\end{subfigure}\hfill
	\begin{subfigure}[t]{0.45\textwidth}
        \begin{center}   
		\includegraphics[width=\textwidth]{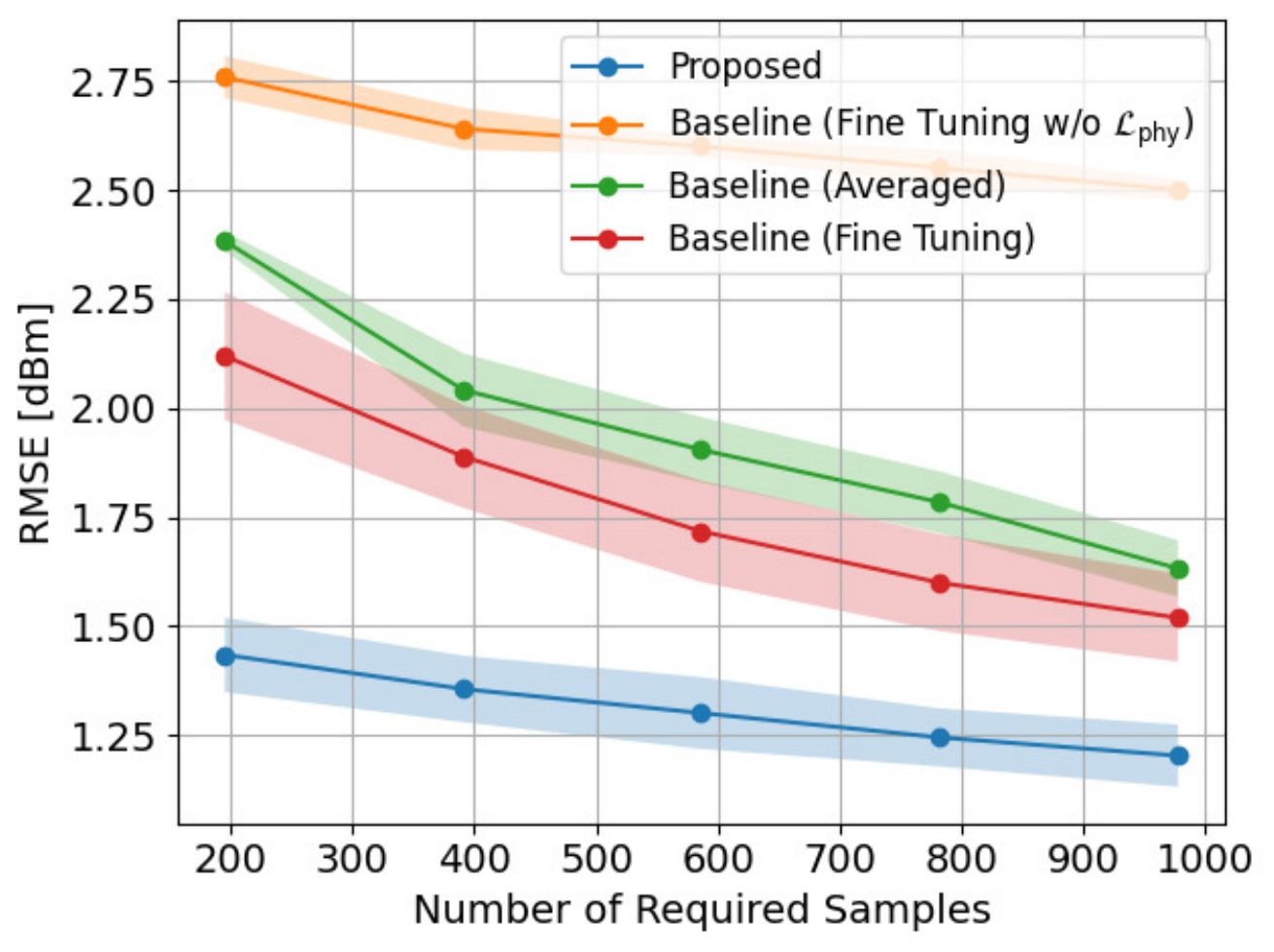}
		\end{center}
		    \vspace{-0.2cm}
		\caption{\ac{RMSE} of \ac{BS} 5 with respect to the amount of data. }
		\label{fig: agent5_tta_MAE}
	\end{subfigure}\hfill
	    \vspace{-0.2cm}
    \caption{Performance of \ac{BS} 4 and 5 under domain shifts with the proposed collaborative domain adaptation.}
    \label{fig: agent45_tta}
    	     \vspace{-0.3cm}
\end{figure*}

Figure \ref{fig: agent45_tta} presents the performance of the proposed collaborative domain adaptation approach. Here, \ac{BS}s 4 and 5 observe their VAL-1 datasets after the training. Hence, \ac{BS}s 4 and 5 need to adapt to the new environment to recover the performance loss as shown in Fig. \ref{fig: agents_MAE_concept}. We compare the proposed framework with three baselines. Baseline (Fine Tuning) locally fine-tunes the whole model. Baseline (Averaged) receives model parameters from other agents but, it performs aggregation in a way similar to federated learning's FedAvg algorithm \cite{10171192}. Baseline (Fine Tuning w.o, $\medloss_\text{phy}$) locally fine-tunes the entire model without using $\medloss_\text{phy}$.  To test the performance of the collaborative domain adaptation approach, we induced similar blockage patterns in the VAL-1 of \ac{BS}s 4 and 5 to the training data of \ac{BS}s 1, 2, and 3. A visualization of the example training data of \ac{BS} 3 and VAL-1 of \ac{BS} 5 is provided in Fig. \ref{fig: tta_example}. Hence, \ac{BS}s 1, 2, and 3  have useful knowledge that can be helpful to mitigate the domain shift of \ac{BS} 4 and 5. We can see that our framework outperforms the three baselines with respect to performance and data efficiency. We can observe that the proposed scheme requires significantly less data samples in test-time compared to the baselines. Our scheme needs 25\% of data to outperform the Baseline (Fine Tuning). \ac{BS} 4 and 5 can adapt to the current domain with minimal data samples because they use the knowledge of other agents to mitigate the domain shift. Baseline (Fine Tuning) performs better than other baselines due to the physics-based loss function. However, it needs more data than our scheme because it does not efficiently utilize the knowledge of other agents. Similarly, Baseline (Averaged) averages every model parameters that are not always helpful to mitigate the current domain shift, thereby decreasing the data efficiency.

\begin{figure*}[t!]
	\begin{subfigure}[t]{0.45\textwidth}
			\begin{center}   				%
				\includegraphics[width=\textwidth]{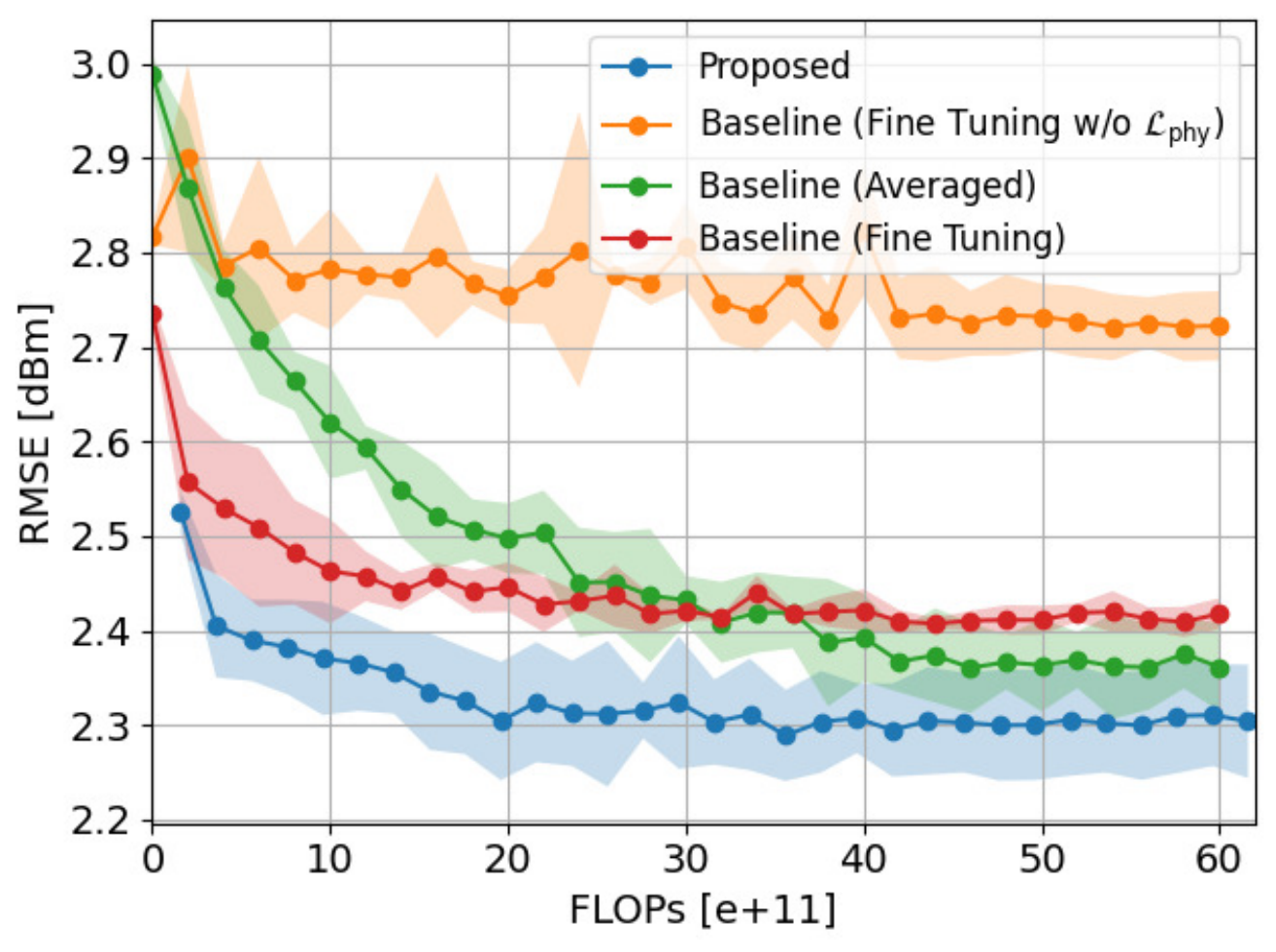}
		\end{center}
	    \vspace{-0.2cm}
		\caption{Learning curve of the \ac{RMSE} of \ac{BS} 4 under  domain shifts.}
		\label{fig: agent4_epoch}
	\end{subfigure}\hfill
	\begin{subfigure}[t]{0.45\textwidth}
        \begin{center}   
		\includegraphics[width=\textwidth]{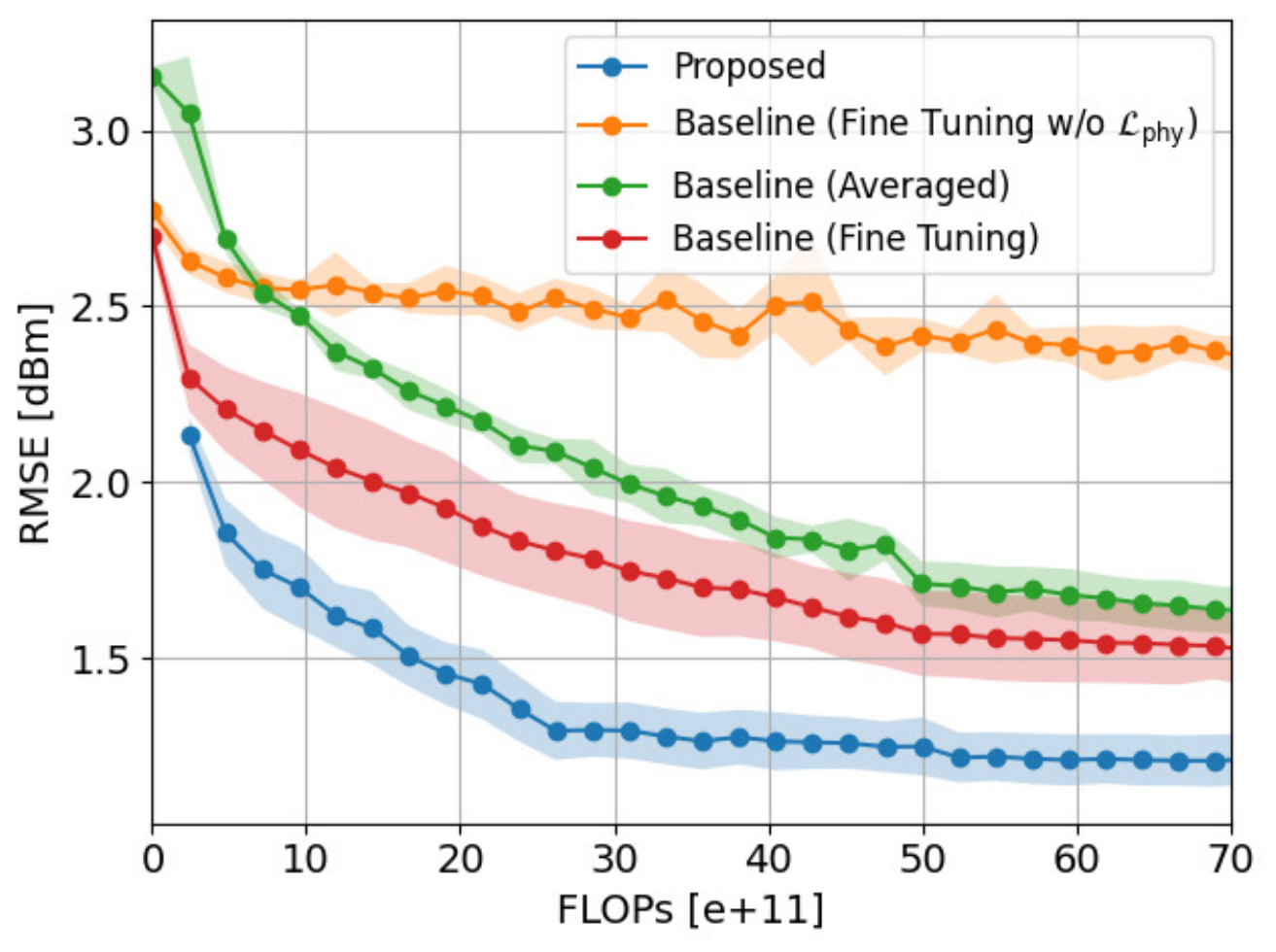}
		\end{center}
	    \vspace{-0.2cm}
		\caption{Learning curve of the \ac{RMSE} of \ac{BS} 5 under domain shifts. }
		\label{fig: agent5_epoch}
	\end{subfigure}\hfill
	    \vspace{-0.1cm}
    \caption{Learning curves of \ac{BS} 4 and 5 under domain shifts with the proposed collaborative domain adaptation.}
    \label{fig: agent45_epoch}
    	     \vspace{-0.3cm}
\end{figure*}

In Fig. \ref{fig: agent45_epoch}, we present the learning curves of \ac{BS} 4 and 5 under the domain shifts used in Fig. \ref{fig: agent45_tta} with respect to the number of \ac{FLOPs}. Here, we report \ac{FLOPs} to measure the compute overhead during the adaptation. We can see that the proposed collaborative domain adaptation converges significantly faster than the baselines with less computation. \Ac{BS} 4 and 5 can initiate their models using the domain similarity-based aggregation. Hence, the proposed aggregation gives a high weight to models that experienced similar domains. Note that our method first retrieves models from other \ac{BS}s, and does $\numagents-1$ inferences to calculate the domain similarity as shown in Algorithm. \ref{algo:1}, leading to additional computation and communication overhead. The communication overhead negligible as the model size is 50 Mb and the backhaul often provides throughput of Gbps scale \cite{tezergil2022wireless}. Calculating domain similarities requires additional forward passes. Hence, in Fig. \ref{fig: agent45_epoch}, we can see that the curves of \ac{BS}s 4 and 5 start later than other baselines that do fine-tuning immediately. However, the baselines show slower convergence due to the lack of enough available data for the adaptation. For Baseline (Averaged), the aggregated model is not helpful to mitigate the current domain shift, and it degrades the performance due to the aggregation of irrelevant models.

\begin{figure*}[t!]
	\begin{subfigure}[t]{0.45\textwidth}
		\begin{center}   				%
			\includegraphics[width=\textwidth]{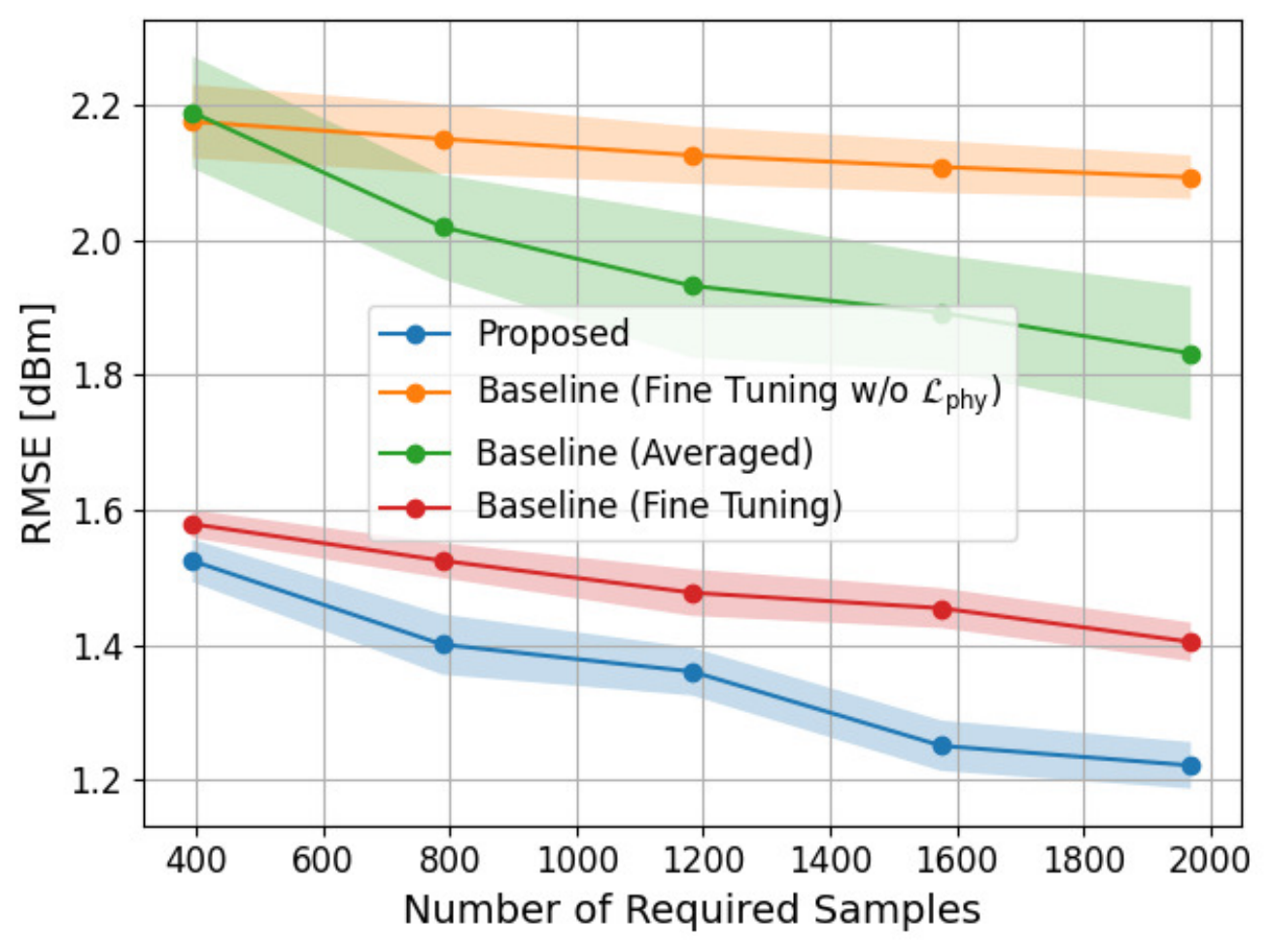}
		\end{center}
	     \vspace{-0.3cm}
		\caption{Learning curve of the \ac{RMSE} of \ac{BS} 2 under  domain shifts.}
		\label{fig: agent2_tta_RMSE}
	\end{subfigure}\hfill
	\begin{subfigure}[t]{0.45\textwidth}
		\begin{center}   
			\includegraphics[width=\textwidth]{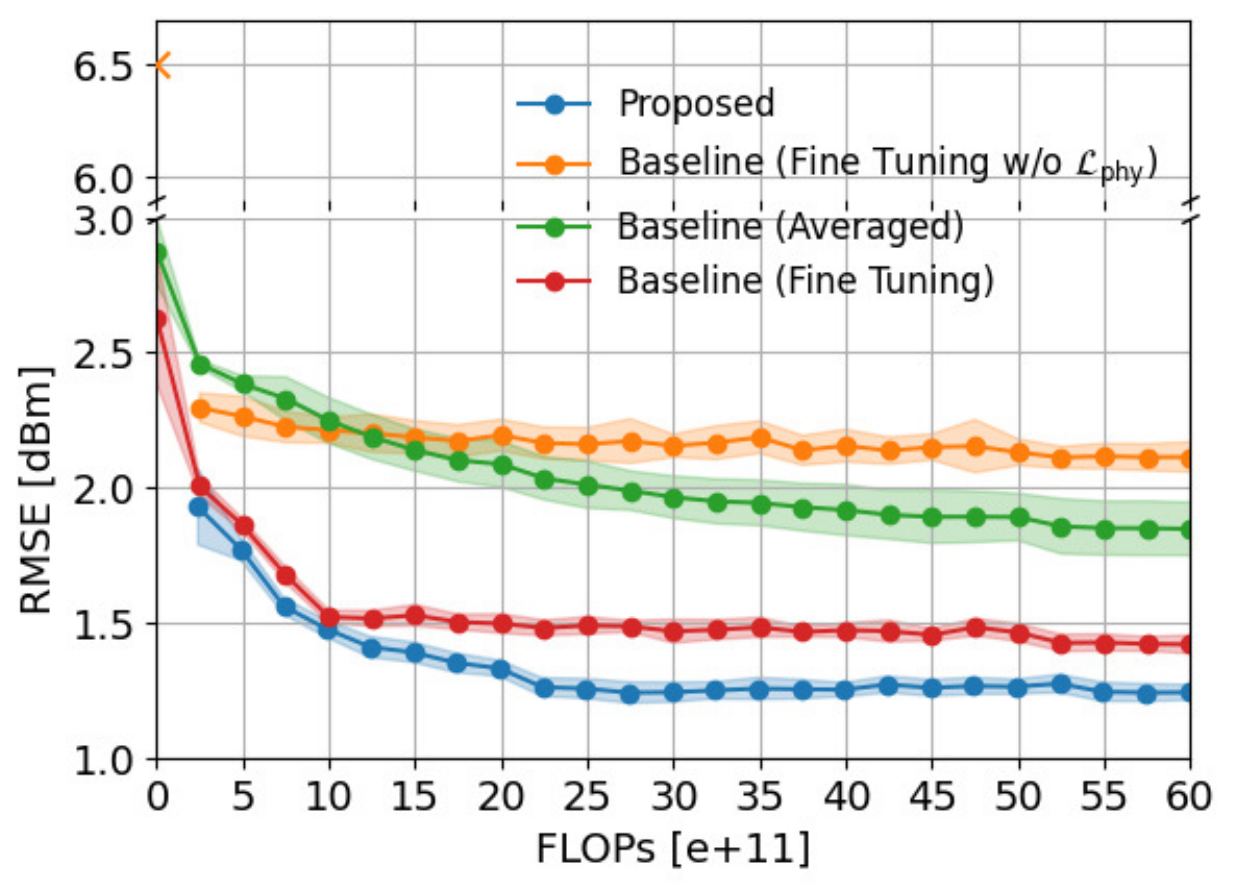}
		\end{center}
		     \vspace{-0.3cm}
		\caption{Learning curve of the \ac{RMSE} of \ac{BS} 2 under domain shifts. }
		\label{fig: agent2_epoch}
	\end{subfigure}\hfill
	\caption{Learning curves of \ac{BS} 2 under domain shifts with the proposed collaborative domain adaptation.}
	\label{fig: agent2_tta}
		     \vspace{-0.3cm}
\end{figure*}

Figure \ref{fig: agent2_tta} presents the performance of the proposed collaborative domain adaptation with \ac{BS} 2 using its VAL-2 dataset. To simulate the domain shift, \ac{BS}2 observes its VAL-2 dataset after the training. To test the performance of the collaborative domain adaptation, we induced the same covariate shift patterns of the VAL-2 to the training datasets of \ac{BS}s 4 and 5. Hence, \ac{BS}s 4 and 5 have useful knowledge that guides \ac{BS} 2 under the domain shift. We can see that our framework outperforms the three baselines with respect to the data efficiency and adaptation speed. In Fig. \ref{fig: agent2_tta_RMSE}, we can see that our scheme needs only 40\% of data to outperform the Baseline (Fine Tuning). \Ac{BS} 2 shows high data efficiency because it utilizes the knowledge of \acp{BS} 4 and 5 to mitigate the current domain shift. Fig. \ref{fig: agent2_tta} also shows that our method converges faster to better \ac{RMSE} than other baselines because it starts adaptation by using the knowledge of \acp{BS} 4 and 5.

\section{Conclusion} \label{sec: med_conclusion}

In this paper, we have studied the problems of achieving robust generalization to dynamic wireless environments with limited multi-modal data. We have proposed to use physics-based training to improve the data efficiency and robustness to domain shifts. We have presented the theoretical insights on the impact of physics-based training on data efficiency. Then, we have presented the collaborative domain adaptation framework to utilize the knowledge of other \ac{BS}s to help under-performing \ac{BS}s under domain shifts to improve the data efficiency and adaptation speed. To this end, we have proposed to measure the domain similarity between each \ac{BS} to effectively aggregate model parameters of \ac{BS}s. We have provided one use case of our framework for predicting \ac{RSS} maps of multiple \acp{BS} based on multi-modal data. To validate our framework, we have developed a novel multi-modal dataset that can simulate the proposed use case and induce domain shifts. Simulation results validate the data efficiency, generalization performance, and our theoretical analysis of the proposed frameworks. The proposed algorithms require much less training data than a standard fine-tuning and the baselines to achieve convergence. Moreover, our framework shows stronger generalization to domain shifts under dynamic wireless environments. In essence, this work provides the first systematic study on how to design a learning framework to achieve robust generalization with limited multi-modal data under dynamic wireless environments.

\appendix
\subsection{Proof Theorem \ref{theorem:1}} \label{proof: theorem1}
The proof is inspired by \cite{garg2020functional}. First, we derive that $\nn \in \hypothesis(\epsilon_0)$ achieves $\medloss_\text{phy} (\medout, \nn(\medmultimodalin)) = 0$ with a certain probability. We assume a set of $\nn \notin \hypothesis(\epsilon_0)$, where $\hypothesis(\epsilon_0) = \{ \nn \in \hypothesis : \bar{\medloss}_\text{phy}(\medout, \theta(\medmultimodalin)) \leq \epsilon_0 \}$. For each data sample $\medmultimodalin_i$, we define an indicator variable $z_{\text{phy}, i} = \mathbbm{1} \left[ l_\text{phy}(\medout, \nn(\medmultimodalin_i) = 0  \right]$. Assume that $z_{\text{phy}, i}$ follows a Bernoulli distribution such that $p = \P[z_{\text{phy}, i} = 1], \forall i$. Then, $\bar{\medloss}_\text{phy}(\medout, \nn(\medmultimodalin)$ can be bounded as follows
\begin{align}
	\bar{\medloss}_\text{phy}(\nn(\medmultimodalin)) &= \E_{(\medmultimodalin, \medout) \sim \P(\medmultimodalin, \medout)}[l_\text{phy} (\medout, \theta(\medmultimodalin)) ]  \\
		&= \E_{(\medmultimodalin, \medout) }[l_\text{phy} (\medout, \theta(\medmultimodalin)) ] \P[l_\text{phy} (\medout, \theta(\medmultimodalin)) = 0] 
		\ka
		&\quad +  \E_{(\medmultimodalin, \medout)}[l_\text{phy}(\medout, \theta(\medmultimodalin)) ] \P[l_\text{phy} (\medout, \theta(\medmultimodalin)) > 0] \ka
		&= (1-p) \E_{(\medmultimodalin, \medout) \sim \P(\medmultimodalin, \medout)}[l_\text{phy} (\medout, \theta(\medmultimodalin)) ] \\
		&< (1-p),
	\end{align}
	where the last inequality comes from the assumption that $l_\text{phy} (\medout, \theta(\medmultimodalin)) \in [0, 1].$ Since $\nn \notin \hypothesis(\epsilon_0)$, we can bound $p$ as follows
	\begin{align}
		p < 1 - \bar{\medloss}_\text{phy}(\medout, \nn(\medmultimodalin))  < 1 - \epsilon_0.
	\end{align}
	Then, for $\nn \notin \hypothesis(\epsilon_0)$ and the number of data samples $m$, the probability of the empirical loss $\medloss_\text{phy} (\medout, \nn(\medmultimodalin))$ to be zero can be bounded as follows
	\begin{align}
		\P[\medloss_\text{phy} (\medout, \nn(\medmultimodalin)) = 0] < (1-\epsilon_0)^m.
	\end{align}
	Then, with probability at least $1 - (1 - \epsilon_0)^m$, only $\nn$ with $\medloss_\text{phy} (\medout, \nn(\medmultimodalin)) \leq \epsilon_0$ have $\medloss_\text{phy} (\medout, \nn(\medmultimodalin)) = 0$. Hence, only $\nn \in \hypothesis(\epsilon_0)$ achieve $\medloss_\text{phy} (\medout, \nn(\medmultimodalin)) = 0$. 
	
	Next, we show that only $\nn$ with $\bar{\medloss}_\text{data}(\medout, \nn(\medmultimodalin)) \leq \epsilon_1$ can achieve $\bar{\medloss}_\text{phy}(\medout, \nn(\medmultimodalin)) \leq \epsilon_0$ with a certain probability. We follow similar steps as above. Consider $\nn \in \hypothesis(\epsilon_0)$ and $\bar{\medloss}_\text{data} (\medout, \nn(\medmultimodalin)) > \epsilon_1$. We define $z_{\text{data}, i} = \mathbbm{1} \left[  
	l_\text{data}(\medout_i, \nn(\medmultimodalin_i)) = 0
	\right]$ 
	and assume it follows a Bernoulli distribution such that $p' = \P[z_{\text{data}, i} = 1], \forall i.$ Then, $\bar{\medloss}_\text{data}(\medout, \nn(\medmultimodalin))$ can be bounded as $\bar{\medloss}_\text{data}(\medout, \nn(\medmultimodalin)) < 1-p'$ as done above. Similarly, we bound $p'$ as follows
	\begin{align}
		p' < 1 - \bar{\medloss}_\text{data}(\medout, \nn(\medmultimodalin)) < 1 - \epsilon_1.
	\end{align}
	Then, for $\nn \in \hypothesis(\epsilon_0)$ and $\bar{\medloss}_\text{data}(\medout, \nn(\medmultimodalin)) > \epsilon_1$, the probability of $\medloss_\text{data}(\medout, \nn(\medmultimodalin))$ to be zero can be bounded as
	\begin{align}
		\P[\medloss_\text{data}(\medout, \nn(\medmultimodalin)) = 0] &< (1-\epsilon_1)^m \\
		&< |\hypothesis(\epsilon_0)| (1-\epsilon_1)^m \label{union_bound} \\
		&< |\hypothesis(\epsilon_0)| \exp(-\epsilon_1 m) \label{exp},
	\end{align}
	where \eqref{union_bound} is from the union bound, and the last inequality is from $(1-x)^y \leq \exp(-xy)$. By bounding \eqref{exp} with a certain probability $\delta \in (0, 1)$, we can derive the bound of $m$ to satisfy \eqref{exp} as follows
	\begin{align}
		m \geq \frac{1}{\epsilon_1} \left[  
		\ln|\hypothesis(\epsilon_0)| + \ln\frac{1}{\delta}
		\right].
	\end{align}
	Therefore, with probability at least $1 - \delta$, only $\nn \in \hypothesis(\epsilon_0)$ and $\hypothesis(\epsilon_1)$ can have $\P[\medloss_\text{data}(\medout, \nn(\medmultimodalin)) = 0]$ with the given value of $m$, thereby proving the theorem.  
	
\bibliographystyle{IEEEtran}
\bibliography{Bibtex/StringDefinitions,Bibtex/IEEEabrv,Bibtex/mybib}

% Generated by IEEEtran.bst, version: 1.14 (2015/08/26)
\begin{thebibliography}{10}
\providecommand{\url}[1]{#1}
\csname url@samestyle\endcsname
\providecommand{\newblock}{\relax}
\providecommand{\bibinfo}[2]{#2}
\providecommand{\BIBentrySTDinterwordspacing}{\spaceskip=0pt\relax}
\providecommand{\BIBentryALTinterwordstretchfactor}{4}
\providecommand{\BIBentryALTinterwordspacing}{\spaceskip=\fontdimen2\font plus
\BIBentryALTinterwordstretchfactor\fontdimen3\font minus
  \fontdimen4\font\relax}
\providecommand{\BIBforeignlanguage}[2]{{%
\expandafter\ifx\csname l@#1\endcsname\relax
\typeout{** WARNING: IEEEtran.bst: No hyphenation pattern has been}%
\typeout{** loaded for the language `#1'. Using the pattern for}%
\typeout{** the default language instead.}%
\else
\language=\csname l@#1\endcsname
\fi
#2}}
\providecommand{\BIBdecl}{\relax}
\BIBdecl

\bibitem{10929033}
W.~Saad, O.~Hashash, C.~K. Thomas, C.~Chaccour, M.~Debbah, N.~Mandayam, and
  Z.~Han, ``Artificial general intelligence ({AGI})-native wireless systems: A
  journey beyond {6G},'' \emph{Proceedings of the IEEE}, pp. 1--39, 2025.

\bibitem{park2025resource}
Y.~M. Park, Y.~K. Tun, W.~Saad, and C.~S. Hong, ``Resource-efficient beam
  prediction in mmwave communications with multimodal realistic simulation
  framework,'' \emph{arXiv preprint arXiv:2504.05187}, 2025.

\bibitem{10949588}
L.~Bai, Z.~Huang, M.~Sun, X.~Cheng, and L.~Cui, ``Multi-modal intelligent
  channel modeling: A new modeling paradigm via synesthesia of machines,''
  \emph{IEEE Communications Surveys and Tutorials}, Apr. 2025.

\bibitem{alkhateeb2023deepsense}
A.~Alkhateeb, G.~Charan, T.~Osman, A.~Hredzak, J.~Morais, U.~Demirhan, and
  N.~Srinivas, ``Deepsense {6G}: A large-scale real-world multi-modal sensing
  and communication dataset,'' \emph{IEEE Communications Magazine}, vol.~61,
  no.~9, pp. 122--128, Sep. 2023.

\bibitem{schuhmann2022laion}
C.~Schuhmann, R.~Beaumont, R.~Vencu, C.~Gordon, R.~Wightman, M.~Cherti,
  T.~Coombes, A.~Katta, C.~Mullis, M.~Wortsman \emph{et~al.}, ``Laion-5b: An
  open large-scale dataset for training next generation image-text models,''
  \emph{Advances in neural information processing systems}, vol.~35, pp.
  25\,278--25\,294, 2022.

\bibitem{qu2023modality}
S.~Qu, Y.~Pan, G.~Chen, T.~Yao, C.~Jiang, and T.~Mei, ``Modality-agnostic
  debiasing for single domain generalization,'' in \emph{Proceedings of the
  IEEE/CVF Conference on Computer Vision and Pattern Recognition}, 2023, pp.
  24\,142--24\,151.

\bibitem{zhang2025out}
X.~Zhang, J.~Li, W.~Chu, R.~Xu, Y.~Yang, S.~Guan, J.~Xu, L.~Jing, P.~Cui
  \emph{et~al.}, ``On the out-of-distribution generalization of large
  multimodal models,'' in \emph{Proceedings of the Computer Vision and Pattern
  Recognition Conference}, 2025, pp. 10\,315--10\,326.

\bibitem{jiao2025addressing}
T.~Jiao, Z.~Xiao, Y.~Xu, C.~Ye, Y.~Huang, Z.~Chen, L.~Cai, J.~Chang, D.~He,
  Y.~Guan \emph{et~al.}, ``Addressing the curse of scenario and task
  generalization in {AI-6G}: A multi-modal paradigm,'' \emph{{IEEE} Trans.
  Wireless Commun.}, Apr. 2025.

\bibitem{jiao2025ai2mmum}
T.~Jiao, Z.~Xiao, Y.~Huang, C.~Ye, Y.~Feng, L.~Cai, J.~Chang, F.~Liu, Y.~Xu,
  D.~He \emph{et~al.}, ``Ai2mmum: Ai-ai oriented multi-modal universal model
  leveraging telecom domain large model,'' \emph{arXiv preprint
  arXiv:2505.10003}, 2025.

\bibitem{chen2025analogical}
Z.~Chen, Z.~Zhang, Z.~Xing, R.~Li, Z.~Yang, R.~Jin, C.~Huang, Y.~Yang, and
  M.~Debbah, ``Analogical learning for cross-scenario generalization: Framework
  and application to intelligent localization,'' \emph{arXiv preprint
  arXiv:2504.08811}, 2025.

\bibitem{liu2023wisr}
S.~Liu, Z.~Chen, M.~Wu, C.~Liu, and L.~Chen, ``Wisr: Wireless domain
  generalization based on style randomization,'' \emph{{IEEE} Trans. Mobile
  Comput.}, vol.~23, no.~5, pp. 4520--4532, May 2024.

\bibitem{10207706}
S.~Liu, Z.~Chen, M.~Wu, H.~Wang, B.~Xing, and L.~Chen, ``Generalizing wireless
  cross-multiple-factor gesture recognition to unseen domains,'' \emph{{IEEE}
  Trans. Mobile Comput.}, vol.~23, no.~5, pp. 5083--5096, May 2024.

\bibitem{10742098}
Y.~Zhang, Q.~Li, H.~Liu, L.~Yang, and J.~Yang, ``Domain generalization for
  cross-receiver radio frequency fingerprint identification,'' \emph{{IEEE}
  Internet Things J.}, vol.~12, no.~5, pp. 5207--5218, Mar. 2025.

\bibitem{seretis2022toward}
A.~Seretis and C.~D. Sarris, ``Toward physics-based generalizable convolutional
  neural network models for indoor propagation,'' \emph{{IEEE} Trans. Antennas
  Propag.}, vol.~70, no.~6, pp. 4112--4126, 2022.

\bibitem{li2025digital}
T.~Li, H.~Lei, H.~Guo, M.~Yin, Y.~Hu, Q.~Zhu, and S.~Rangan, ``Digital
  twin-enhanced wireless indoor navigation: Achieving efficient environment
  sensing with zero-shot reinforcement learning,'' \emph{IEEE Open Journal of
  the Communications Society}, Mar. 2025.

\bibitem{zheng2023cell}
Y.~Zheng, J.~Wang, X.~Li, J.~Li, and S.~Liu, ``Cell-level rsrp estimation with
  the image-to-image wireless propagation model based on measured data,''
  \emph{{IEEE} Trans. Cogn. Commun. Netw.}, vol.~9, no.~6, pp. 1412--1423, Dec.
  2023.

\bibitem{li2024map}
Z.~Li, M.~Chen, G.~Li, X.~Lin, and Y.~Liu, ``Map-driven mmwave link quality
  prediction with spatial-temporal mobility awareness,'' \emph{{IEEE} Trans.
  Mobile Comput.}, Dec. 2024.

\bibitem{dosovitskiy2017carla}
A.~Dosovitskiy, G.~Ros, F.~Codevilla, A.~Lopez, and V.~Koltun, ``Carla: An open
  urban driving simulator,'' in \emph{Conference on robot learning}, 2017, pp.
  1--16.

\bibitem{garg2020functional}
S.~Garg and Y.~Liang, ``Functional regularization for representation learning:
  A unified theoretical perspective,'' \emph{Advances in Neural Information
  Processing Systems}, vol.~33, pp. 17\,187--17\,199, 2020.

\bibitem{kim2024spafl}
M.~Kim, W.~Saad, M.~Debbah, and C.~S. Hong, ``Spafl: Communication-efficient
  federated learning with sparse models and low computational overhead,''
  \emph{Advances in Neural Information Processing Systems}, vol.~37, pp.
  86\,500--86\,527, 2024.

\bibitem{tian2021exploring}
J.~Tian, Y.-C. Hsu, Y.~Shen, H.~Jin, and Z.~Kira, ``Exploring covariate and
  concept shift for out-of-distribution detection,'' in \emph{NeurIPS 2021
  Workshop on Distribution Shifts: Connecting Methods and Applications}, 2021.

\bibitem{jaeckel2017explicit}
S.~Jaeckel, L.~Raschkowski, S.~Wu, L.~Thiele, and W.~Keusgen, ``An explicit
  ground reflection model for mm-wave channels,'' in \emph{Proc. IEEE Wireless
  Commun. and Networking Conf.}, San Francisco, CA, USA, May 2017, pp. 1--5.

\bibitem{3gpp}
``3gpp tr 38.901 v14.2.0, “study on channel model for frequencies from 0.5 to
  100 ghz”,'' 2019.

\bibitem{bsm}
\BIBentryALTinterwordspacing
{U.S. Department of Transportation}. Advanced messaging concept development:
  Basic safety message. [Online]. Available:
  \url{https://data.transportation.gov/Automobiles/Advanced-Messaging-Concept-Development-Basic-Safet/eezi-v4pm/about_data}
\BIBentrySTDinterwordspacing

\bibitem{tezergil2022wireless}
B.~Tezergil and E.~Onur, ``Wireless backhaul in 5g and beyond: Issues,
  challenges and opportunities,'' \emph{IEEE communications surveys and
  tutorials}, vol.~24, no.~4, pp. 2579--2632, 2022.

\bibitem{conlan2017blender}
C.~Conlan, ``The blender python api,'' \emph{Precision 3D Modeling and Add},
  2017.

\bibitem{ali2020passive}
A.~Ali, N.~Gonz{\'a}lez-Prelcic, and A.~Ghosh, ``Passive radar at the roadside
  unit to configure millimeter wave vehicle-to-infrastructure links,''
  \emph{{IEEE} Trans. Veh. Technol.}, vol.~69, no.~12, pp. 14\,903--14\,917,
  Dec. 2020.

\bibitem{xu2022computer}
W.~Xu, F.~Gao, X.~Tao, J.~Zhang, and A.~Alkhateeb, ``Computer vision aided
  mmwave beam alignment in v2x communications,'' \emph{{IEEE} Trans. Wireless
  Commun.}, vol.~22, no.~4, pp. 2699--2714, Apr. 2022.

\bibitem{narayanan2020lumos5g}
A.~Narayanan, E.~Ramadan, R.~Mehta, X.~Hu, Q.~Liu, R.~A. Fezeu, U.~K. Dayalan,
  S.~Verma, P.~Ji, T.~Li \emph{et~al.}, ``Lumos5g: Mapping and predicting
  commercial mmwave 5g throughput,'' in \emph{Proceedings of the ACM internet
  measurement conference}, 2020, pp. 176--193.

\bibitem{yao2022c}
H.~Yao, Y.~Wang, L.~Zhang, J.~Y. Zou, and C.~Finn, ``C-mixup: Improving
  generalization in regression,'' \emph{Advances in neural information
  processing systems}, vol.~35, pp. 3361--3376, 2022.

\bibitem{10171192}
M.~Kim, W.~Saad, M.~Mozaffari, and M.~Debbah, ``Green, quantized federated
  learning over wireless networks: An energy-efficient design,'' \emph{{IEEE}
  Trans. Wireless Commun.}, vol.~23, no.~2, pp. 1386--1402, Feb. 2024.

\end{thebibliography}
\end{document}